\let\doendproof\endproof\renewcommand\endproof{~\hfill$\qed$\doendproof}
\DeclarePairedDelimiter\set{\{}{\}}
\DeclarePairedDelimiter\abs{\lvert}{\rvert}
\def\Oh{\ensuremath{\mathcal{O}}}
\def\Pl{\ensuremath{P_\mathrm{l}}\xspace}
\def\Pr{\ensuremath{P_\mathrm{r}}\xspace}
\def\Pt{\ensuremath{P_\mathrm{t}}\xspace}
\def\Pb{\ensuremath{P_\mathrm{b}}\xspace}
\def\R{\ensuremath{\mathcal{R}}\xspace}
\def\P{\ensuremath{\mathcal{P}}\xspace}
\def\Sdone{\ensuremath{\mathcal{S}_1^\checkmark\!}\xspace}
\def\vWest{\ensuremath{v_\mathrm{W}}\xspace}
\def\vSouth{\ensuremath{v_\mathrm{S}}\xspace}
\def\vEast{\ensuremath{v_\mathrm{E}}\xspace}
\def\vNorth{\ensuremath{v_\mathrm{N}}\xspace}
\def\eps{\ensuremath{\varepsilon}\xspace}
\def\RecDual{\texttt{RecDual}}
\def\RepEx{\texttt{RepEx}}
\crefname{enumi}{Condition}{Conditions}
\newcommand{\etal}{{et~al.}}
\definecolor{defblue}{rgb}{0.08235294118,0.3098039216,0.537254902}
\let\emph\relax
\DeclareTextFontCommand{\emph}{\color{defblue}\em}
\title{Extending Partial Representations of Rectangular Duals with Given Contact Orientations%
\thanks{Partially supported by DFG grants Ru$\,$1903/3-1 and Wo$\,$758/11-1. 
We thank the anonymous reviewers for their helpful comments.}} 
\titlerunning{Extending Partial Representations of Rectangular Duals}
\author{Steven~Chaplick\inst{1}\orcidID{0000-0003-3501-4608} 
\and Philipp~Kindermann\inst{2}\orcidID{0000-0001-5764-7719} 
\and Jonathan~Klawitter\inst{3}\orcidID{0000-0001-8917-5269} 
\and Ignaz~Rutter\inst{4}\orcidID{0000-0002-3794-4406} 
\and Alexander~Wolff\inst{3}\orcidID{0000-0001-5872-718X}
}
\renewcommand{\orcidID}[1]{\href{https://orcid.org/#1}{\includegraphics[scale=.03]{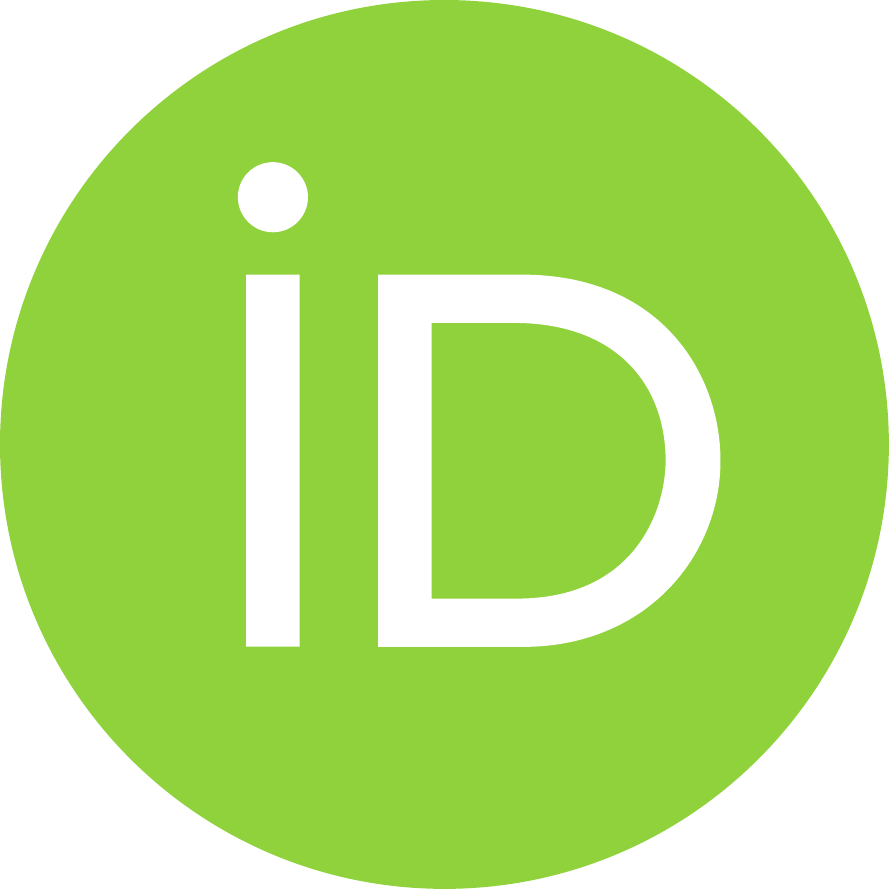}}}
\authorrunning{Chaplick et al.}
\institute{Maastricht University, Maastricht, The Netherlands  
\and
	Universität Trier, Trier, Germany
\and
	Universität Würzburg, Würzburg, Germany
\and
	Universität Passau, Passau, Germany
} 
\begin{document}

\maketitle        


\pdfbookmark[1]{Abstract}{Abstract}
\begin{abstract}
A rectangular dual of a graph $G$ is a contact representation of $G$ by axis-aligned rectangles
such that (i)~no four rectangles share a point and 
(ii)~the union of all rectangles is a rectangle.
The \emph{partial representation extension problem} for rectangular duals asks 
whether a given partial rectangular dual can be extended to a rectangular dual,
that is, whether there exists a rectangular dual where some vertices are represented by prescribed rectangles.
Combinatorially, a rectangular dual can be described by a regular edge labeling (REL), 
which determines the orientations of the rectangle contacts.

We describe two approaches to solve the partial representation extension problem for rectangular duals with given REL.
On the one hand, we characterise the RELs that admit an extension, which leads to a linear-time testing algorithm.
In the affirmative, we can construct an extension in linear time.
This partial representation extension problem can also be formulated as
a linear program (LP).  
We use this LP to solve the \emph{simultaneous
  representation problem} for the case of rectangular duals when each input graph is given together with a REL.
\keywords{rectangular dual \and partial representation extension \and simultaneous representation}
\end{abstract}

\section{Introduction}

A \emph{geometric intersection representation} of a graph~$G$ is
a mapping $\R$ that assigns to each vertex $w$ of $G$ a geometric object
$\R(w)$ such that two vertices~$u$ and~$v$ are adjacent in~$G$ if and
only if $\R(u)$ and $\R(v)$ intersect.  
In a \emph{contact representation} we further require that,
for any two vertices $u$ and $v$, the objects
$\R(u)$ and $\R(v)$ have disjoint interiors.
The \emph{recognition problem} asks whether a given graph admits
an intersection or contact representation whose sets have a specific
geometric shape.  Classic examples are interval
graphs~\cite{BL76}, where the objects are intervals of $\mathbb{R}$,
or coin graphs~\cite{Koe36}, 
where the objects are interior-disjoint disks in the plane.
The \emph{partial representation extension problem} is a natural
generalization of this question where, 
for each vertex~$u$ of a given subset of the vertex set, 
the geometric object is already prescribed, 
and the question is whether this partial representation can
be extended to a full representation of the input graph.  
In the last decade the partial representation extension problem has been intensely
studied for various classes of intersection graphs, such as
(unit or proper) interval graphs~\cite{kkosv17,kkorsv17},
circle graphs~\cite{CFK19}, trapezoid graphs~\cite{KW17}, as well as
for contact representations~\cite{CDKMS14} and bar-visibility
representations~\cite{CGGKL18}.

A different generalization is the \emph{simultaneous representation
  problem}, where, given several input graphs $G_1, \ldots, G_k$, one
asks whether there exist representations~$\R_1,\dots,\R_k$
of~$G_1,\dots,G_k$ such that each vertex~$v$ contained in~$G_i$ and
in~$G_j$ satisfies~$\R_i(v) = \R_j(v)$, i.e., any two representations
coincide on the shared vertices.  Most frequently, this problem is
studied in the \emph{sunflower case}, where one additionally assumes
that the pairwise intersection of any two graph~$G_i,G_j$
with~$i \ne j$ is the same subgraph~$S$, which is usually called the
\emph{shared graph}.  The question is equivalent to asking whether
there exists a representation of~$S$ that simultaneously extends to
each of the input graphs~$G_1,\dots,G_k$.  Simultaneous representation
problems have long been studied for planar graphs;
see~\cite{bkr-sepg-13,Rut20} for surveys.  For intersection
representations, the problem was originally introduced by Jampani and
Lubiw, who gave polynomial-time algorithms for interval
graphs~\cite{jl-sig-10} as well as for comparability and permutation
graphs~\cite{jl-srpcc-12}.  They also proved NP-completeness for
chordal graphs.  Bläsius and Rutter later improved the running time
for interval graphs to linear~\cite{br-spqoa-16}.  Recently, Rutter et
al.\cite{RSSV19} gave efficient algorithms for proper and unit
interval graphs.  Previous work on simultaneous contact
representations has focused on representing planar graphs and their
duals, for example, with triangles in the plane~\cite{GLP12} or with
boxes in~3D~\cite{AEKPTU15}.

\paragraph{Rectangular duals.}

In this paper we consider the partial representation extension problem for the following type of representation.
A \emph{rectangular dual} of a graph~$G$ is a contact representation~$\R$ of $G$ by axis-aligned rectangles such that (i)~no four rectangles
share a point and (ii)~the union of all rectangles is a rectangle; 
see \cref{fig:dual}. 
We observe that~$G$ may admit a rectangular dual only if it is planar and internally triangulated. 
Furthermore, a rectangular dual can always be augmented with four additional
vertices (one on each side) so that only four rectangles touch the
outer face of the representation.
It is customary that the four vertices on the outer face are denoted 
by $\vSouth$, $v_\mathrm{W}$, $\vNorth$, and $\vEast$ corresponding to the geographic directions,
and to require that $\R(\vWest)$ is the leftmost rectangle, $\R(\vEast)$ is
rightmost, $\R(\vSouth)$ is bottommost, and $\R(\vNorth)$ is topmost; see \cref{fig:dual}.
We call these vertices the \emph{outer vertices} and the remaining
ones the \emph{inner vertices}.
It is known that a plane internally-triangulated graph has a representation 
with only four rectangles touching the outer face
if and only if its outer face is a 4-cycle and it has no separating triangles,
that is, a triangle whose removal disconnects the graph~\cite{KK85}.
Such a graph is called a \emph{properly-triangulated planar (PTP)} graph. 
Kant and He~\cite{KH97} have shown that a rectangular dual of a given
PTP graph~$G$ can be computed in linear time.

\begin{figure}[tb]
  \centering
  \includegraphics[width=\linewidth]{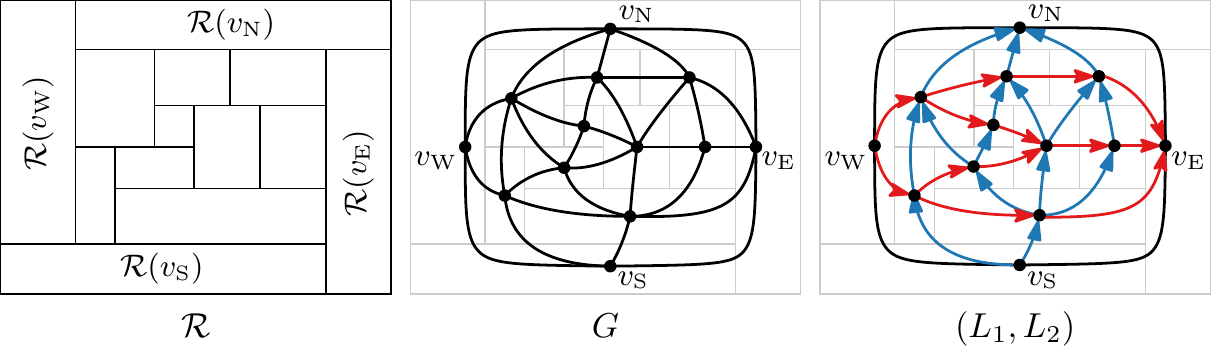}
  \caption{A rectangular dual $\R$ for the graph $G$; the REL $(L_1,L_2)$
    induced by~$\R$.}
  \label{fig:dual}
\end{figure}

Historically, rectangular duals have been studied due to their applications in architecture~\cite{Ste73}, 
VLSI floor-planning~\cite{LL84,YS95}, and cartography~\cite{GS69}.
Besides the question of an efficient construction algorithm~\cite{KH97},
other problems concerning rectangular duals are area minimization~\cite{BGPV08}, sliceability~\cite{KS15}, and area-universality,
that is, rectangular duals where the rectangles can have any given areas~\cite{EMSV12}.
The latter question highlights the close relation between rectangular duals and rectangular cartograms.
Rectangular cartograms were introduced in 1934 by Raisz~\cite{Raisz34} and
combine statistical and geographical information in thematic maps, 
where geographic regions are represented as rectangles and scaled in proportion to some statistic.
There has been lots of work on efficiently computing rectangular cartograms~\cite{HKPS04,vKS07,BSV12};
Nusrat and Kobourov~\cite{NK16} recently surveyed this topic.
As a dissection of a rectangle into smaller rectangles, 
a rectangular dual is also related to other types of dissections, 
for example with squares~\cite{BSST40} or hexagons~\cite{DGHKK12};
see also Felsner's survey~\cite{Fel13}.

\paragraph{Regular edge labelings.}
 
The combinatorial aspects of a contact representation of a graph~$G$ 
can often be described with a coloring and orientation of the edges
of~$G$.  For example, Schnyder woods describe contact representations
of planar graphs by triangles~\cite{dFdMR94}.
Such a description also exists for contact representations by rectangles, for example for
triangle-free rectangle arrangements~\cite{KNU15} or rectangular duals~\cite{KH97}.
More precisely, a rectangular dual $\R$ gives rise to a 2-coloring 
and an orientation of the inner edges of $G$ as follows.  
We color an edge $\set{u, v}$ blue if the contact between $\R(u)$ 
and $\R(v)$ is a horizontal line segment, and we color it red otherwise.  
We orient a blue edge $\set{u, v}$ as $(u,v)$ if $\R(u)$ lies below $\R(v)$, 
and we orient a red edge $\set{u, v}$ as $(u,v)$ if $\R(u)$ 
lies to the left of $\R(v)$; see \cref{fig:dual}.  
The resulting coloring and orientation has the following properties:
\begin{figure}[tb]
	\centering
	\includegraphics{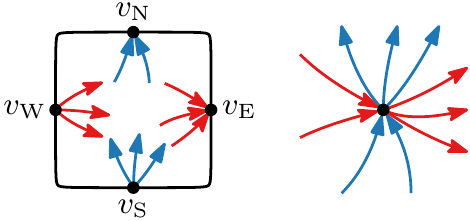} 
	\caption{Edge order at the four outer vertices and an inner vertex.}
\end{figure}
\begin{enumerate}[label=(\arabic*)]
\item All inner edges incident to $\vWest$, $\vSouth$, $\vEast$, and $\vNorth$ are red outgoing, blue
  outgoing, red incoming, and blue incoming, respectively. 
\item The edges incident to each inner vertex form four counterclockwise
  ordered non-empty blocks of red incoming, blue incoming, red outgoing, and blue
  outgoing, respectively.
\end{enumerate}
A coloring and orientation with these properties is called a
\emph{regular edge labeling (REL)} or \emph{transversal structure}.
We let $(L_1,L_2)$ denote a REL,
where $L_1$ is the set of blue edges and $L_2$ is the set of red edges.
Let~$L_1(G)$ and~$L_2(G)$ denote the two subgraphs of~$G$ induced by~$L_1$
and~$L_2$, respectively.
Note that both~$L_1(G)$ and $L_2(G)$ are st-graphs, that is,
directed acyclic graphs with exactly one source and exactly one sink. 
It is well known that every PTP graph admits a REL and thus a rectangular dual~\cite{KH97}. 
A rectangular dual $\R$ \emph{realizes} a REL $(L_1, L_2)$ if the
REL induced by~$\R$ is $(L_1, L_2)$.
Note that while a rectangular dual
uniquely defines a REL, there exist different rectangular duals
that realize any given REL. 

Kant and He~\cite{KH97} introduced RELs
and described two linear-time algorithms that compute a REL for a given PTP graph;
one algorithm is based on edge contractions, the other is based on canonical orderings.
They then use the REL to construct in linear time a rectangular dual that realizes this REL
and where the coordinates are all integers.

\paragraph{Partial rectangular duals.}

For a graph $G$, let $E(G)$ denote the set of edges
and~$V(G)$ the set of vertices of $G$.  
Let~$U \subseteq V(G)$.
Then~$G[U]$ denotes the subgraph of~$G$ induced by~$U$.
The pair~$(U, \P)$ is a \emph{partial rectangular dual} of~$G$
if~$\P$ is a contact representation of~$G[U]$ 
that maps each $u \in U$ to an axis-aligned rectangle $\P(u)$.
We call the vertices in~$U$ \emph{fixed} and for~$u \in U$ we call~$\P(u)$ a \emph{fixed rectangle}.
We further define $\abs{ \P } = \abs{ U }$.
For the sake of readability, we refer from now on to a partial rectangular dual $(U, \P)$ simply with $\P$ and consider the domain $U$ of~$\P$ as implicitly given.

For a given graph $G$ and a partial rectangular dual $\P$,
the \emph{partial rectangular dual extension problem}
asks whether $\P$ can be extended to a rectangular dual $\R$ of $G$.
In particular, for such an extension $\R$ and each fixed vertex $u$, we require that $\P(u) = \R(u)$.
In this paper, we study the variant of this problem where we are not only given $G$ and $\P$,
but also a REL $(L_1, L_2)$ of $G$ and 
ask whether there is an extension $\R$ of $\P$ that realizes $(L_1, L_2)$. 

Closely related work includes partial representation extension
of segment contact graphs~\cite{CDKMS14} and bar-visibility representations~\cite{CGGKL18}.  
Both problems are NP-complete.  
However, the hardness reductions
crucially rely on low connectivity for choices in the planar embedding.  
Since PTP graphs are triconnected, they have a unique planar embedding and hence these results cannot be easily transferred.

\paragraph{Contribution and outline.}

Our first contribution is a linear program (LP) in the form of a system of difference constraints
to compute rectangular duals for PTP graphs with given RELs.
We show how to use the LP (i)~to construct a rectangular dual,
(ii)~to solve the partial representation extension problem, and 
(iii)~to solve the respective simultaneous representation problem
in quadratic time; see \Cref{sec:lp}.

We then give a characterization of RELs that admit an extension of a given partial rectangular dual
via the existence of what we will call a \emph{boundary path set}; see \Cref{sec:characterization}.
Next, we provide an algorithm that constructs a boundary path set (if possible) 
as well as an algorithm that computes a representation extension from a boundary path set. 
Both algorithms run in $\Oh(nh)$ time, where $n = \abs{ V(G) }$ and $h = \abs{ \P }$, and are detailed in \Cref{sec:algorithms}.
Finally, we show that by checking only for the existence of a boundary
path set, but not explicitly constructing one,
we can solve the partial representation extension problem
in linear time; see \Cref{sec:linear-time}.
Our algorithms use the above-mentioned algorithm of Kant and He~\cite{KH97} as a subroutine.
We summarize our main contribution as follows.

\newcommand{\thmtext}{The partial representation extension problem for
  rectangular duals with a fixed regular edge labeling can be solved
  in linear time.  For yes-instances, an explicit rectangular dual can
  be constructed within the same time bound.}
\begin{theorem}\label{clm:main}
  \thmtext
\end{theorem}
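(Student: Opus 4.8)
The plan is to establish the linear-time bound by first recasting the extension problem purely combinatorially, then exploiting the $st$-graph structure of the two layers of a REL. Recall that a REL $(L_1,L_2)$ induces two $st$-graphs $L_1(G)$ and $L_2(G)$, and that a rectangular dual realizing $(L_1,L_2)$ is essentially a pair of consistent topological orderings (one per layer) together with coordinates obtained from longest-path computations. A fixed rectangle $\P(u)$ prescribes four real numbers---its left, right, bottom, and top sides---and extending $\P$ means producing $x$- and $y$-coordinates for all vertices, respecting the REL-induced constraints, that agree with the prescribed values on the fixed vertices. The key conceptual device, introduced in \Cref{sec:characterization}, is the \emph{boundary path set}: a collection of directed paths in $L_1(G)$ and $L_2(G)$ that carve the fixed rectangles into their correct relative positions. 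The characterization theorem of \Cref{sec:characterization} says that $\P$ extends (with REL $(L_1,L_2)$) if and only if such a boundary path set exists, and the algorithm of \Cref{sec:algorithms} constructs one in $\Oh(nh)$ time, with a companion procedure turning it into an explicit dual in $\Oh(nh)$ time. The task here is to shave the $h$ factor.

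The main steps are as follows. First I would invoke the characterization from \Cref{sec:characterization} to reduce the decision problem to: does a boundary path set exist? Second, I would argue that one need not \emph{build} the boundary path set to test its existence---the obstructions to extendability are local and can be detected by a single traversal. Concretely, each fixed rectangle imposes, via the REL, a set of ``must-separate'' and ``must-align'' relations on nearby vertices; these relations propagate along the $st$-orderings of $L_1(G)$ and $L_2(G)$, and a boundary path set exists exactly when the transitive closure of these relations is consistent, i.e., contains no directed cycle and no contradiction between a prescribed coordinate and an induced inequality. Using the fact that PTP graphs are triconnected (hence have a unique embedding) and that $L_1(G),L_2(G)$ are $st$-planar, I would show that consistency can be checked by a left-to-right and bottom-to-top sweep that visits each vertex and edge a constant number of times, charging the work of comparing against fixed rectangles to the REL structure rather than to $h$ separately; this yields $\Oh(n)$ for the test, as claimed in \Cref{sec:linear-time}. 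Third, for yes-instances, the explicit dual is obtained by running the Kant--He construction of \cite{KH97} on $(L_1,L_2)$ but with the longest-path/coordinate computation seeded by the prescribed side-values of the fixed rectangles (treating them as lower and upper bounds in the system of difference constraints of \Cref{sec:lp}); since the constraint graph is a DAG of size $\Oh(n)$, longest paths---and hence all coordinates---are computable in $\Oh(n)$ time, so the whole construction is linear.

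The hard part will be the second step: proving that the existence of a boundary path set can be certified without constructing it, and that the certification sweep really runs in $\Oh(n)$ rather than $\Oh(nh)$. The naive bookkeeping---each of the $h$ fixed rectangles potentially interacting with each of the $n$ vertices---gives exactly the $\Oh(nh)$ of \Cref{sec:algorithms}; the improvement requires showing that the relevant comparisons are already encoded in the REL's combinatorial structure, so that the fixed rectangles impose only $\Oh(n)$ distinct constraints in total (amortized over the embedding) and contradictory configurations surface as local violations detectable in one pass. Once this amortization argument is in place, combining it with the linear-time longest-path computation on the difference-constraint DAG of \Cref{sec:lp} gives both the linear-time test and the linear-time construction, completing the proof of \Cref{clm:main}.
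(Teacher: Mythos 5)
Your outline starts from the right place (reduce to the existence of a boundary path set via \cref{clm:characterization}, then avoid materializing the set), but the step you yourself flag as ``the hard part'' is exactly the step that is missing, and it is the content of the theorem. Your proposal says the obstructions are ``local'' and that a sweep can charge all comparisons against fixed rectangles to the REL structure, but gives no data structure or invariant that makes this work. The paper's mechanism is concrete: it replaces the boundary path set by the \emph{boundary graphs} $H_1\subseteq L_1(G)$ and $H_2\subseteq L_2(G)$ -- the subgraphs induced by the fixed vertices and the vertices appearing on some boundary path, with each edge annotated by the leftmost and rightmost strip whose boundary it lies on. Since $H_1$ is a subgraph of $L_1(G)$ its size is $\Oh(n)$ even though the boundary path set itself can have size $\Omega(nh)$; the strips are processed left to right and only the \emph{new} portions of each $\Pl(S)$ and $\Pr(S)$ are traced (reaching a previously computed path is detected by hitting a non-fixed vertex already present in $H_1$), with consistency of the ordering checked against an st-ordering of $L_1(G)$ (\cref{clm:computeH1}). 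Without an object of this kind, ``the fixed rectangles impose only $\Oh(n)$ distinct constraints in total'' remains an unproved assertion, and your argument does not get below the $\Oh(nh)$ of \cref{clm:algorithm}.

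The construction step also has a genuine problem. You propose seeding a longest-path computation on ``the difference-constraint DAG of \cref{sec:lp}'' with the prescribed side values. But that constraint system contains equalities (one per edge of $L_2(G)$, and four per fixed vertex), so its constraint graph contains oppositely directed edge pairs and is not a DAG; moreover, fixed rectangles impose both upper and lower bounds, so infeasibility manifests as a negative cycle, and detecting that in general requires Bellman--Ford, which is why the paper only claims $\Oh(n^2)$ for the SDC route. The paper's linear-time construction is instead purely combinatorial (\cref{clm:H1}): it walks over the inner faces of $H_1$; a face carrying part of a strip boundary has its vertices' coordinates read off from that strip, while a face interior to a strip bounds a subgraph $G_f$ on which Kant--He is run and the result scaled to the strip's width. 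If you want to keep your LP-flavoured construction, you would need to prove separately that, after contracting equality classes, the remaining system is acyclic and that feasibility is already guaranteed by the decision step -- neither of which your proposal establishes.
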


\section{Linear Programming}
\label{sec:lp}

In this section, we describe how a rectangular dual of a given PTP
graph and a given REL can be computed with the help of an LP.
Felsner~\cite{Fel13} used an LP to compute square duals,
that is, a rectangular dual where each rectangle is actually a square.
While Felsner's LP can be adapted to compute rectangular duals,
we formulate our LP differently such that 
we can also use it for the partial representation extension problem
and the simultaneous representation problem.  For the same reason,
our LP does not compete with the (linear-time) algorithm of 
Kant and He~\cite{KH97}, which solves the task faster than our LP.

Our LP is a so-called \emph{system of difference constraints (SDC)}.
This means that, if we write the LP in the standard form $Ax \le b$, 
every entry of the matrix $A$ is in $\{-1,0,1\}$ and in each row
of~$A$ at most one entry is a~$1$ and at most one entry is a~$-1$.
The advantage of an SDC is that the Bellman--Ford algorithm can be
used to find a solution (if one exists) in $\Oh(n^2 + nm)$ time,
where $n$ is the number of variables and $m$ is the number of
constraints~\cite{CLRS09}.

Let $G$ be a PTP graph, 
let $(L_1, L_2)$ be a REL of~$G$, and let $\eps > 0$.
We call our LP \RecDual$(G, (L_1, L_2), \eps)$.
We first describe the variables and then the constraints.
We associate four variables with each vertex~$u$ of~$G$.
The variables~$x_{1,u}$ and~$x_{2,u}$ denote the x-coordinates of the
left side and the right side of~$\R(u)$, respectively,
and the variables~~$y_{1,u}$ and~$y_{2,u}$ denote the y-coordinates of
the bottom side and the top side of~$\R(u)$, respectively.
In what follows, we treat only the constraints regarding the
x-variables.  The constraints regarding the y-variables are
analogous.  There are no constraints regarding both types of variables.
We require that each rectangle has width and height at least~\eps,
that~is,
\begin{align*}
  x_{2,u} - x_{1,u}  &\ge \eps && \text{for each vertex $u$ of $G$.}\\
  \intertext{We have two types of constraints for the edges. 
  \newline\indent
  First, for every edge $(u, v) \in E(L_2(G))$, we ensure that the
  left side of~$\R(v)$ touches the right side of~$\R(u)$; see
  \cref{fig:LPexample}(a).  In other words,}
  x_{2,u} - x_{1,v} &= 0 && \text{for each edge } (u, v) \in E(L_2(G)). \\
  \intertext{(We treat the two edges $(\vSouth, \vWest)$ and $(\vEast,
  \vNorth)$ as edges of $L_1(G)$ and the two edges $(\vWest, \vNorth)$
  and $(\vSouth, \vEast)$ as edges of $L_2(G)$; see \cref{fig:dual}.
  Thus, the lower left
  corner of the rectangular dual belongs to $\R(\vSouth)$.)
  \newline\indent
  Second, for every edge $(u, v) \in E(L_1(G))$, we enforce that the
  rectangles $\R(u)$ and $\R(v)$ overlap horizontally; see
  \cref{fig:LPexample}(b).  To this end, for a vertex~$u$ in~$G$, let~$v$
  and~$v'$ be the (clockwise) first and last outgoing neighbors of~$u$
  in~$L_1(G)$.  (They do not necessarily have to be distinct.)  Then,}
  x_{2,v} - x_{1,u} &\ge \eps && 
                                 \text{and}\\
  x_{2,u} - x_{1,v'} &\ge \eps && \text{for each vertex $u$ of $G$ with
                                  outgoing neighbors $v$ and $v'$.}
\end{align*}  

\begin{figure}[tb]
  \centering
  \includegraphics{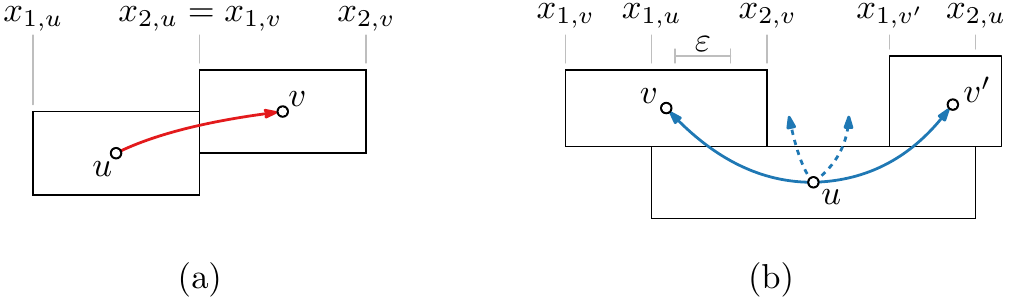}
  \caption{The relation between variables of the SDC such that the
    edges of the REL are represented correctly. }
  \label{fig:LPexample}
\end{figure}

As a result, $\R(v)$ and $\R(v')$ overlap with $\R(u)$ horizontally
by at least~\eps.
Rectangles corresponding to other outgoing neighbors of $u$ 
overlap with $\R(u)$ because they lie between $\R(v)$ and $\R(v')$ by the first type of constraint.
We have analogous inequalities for the first and last {\em incoming} neighbor of~$u$ in~$L_1(G)$.

If the given graph~$G$ has $n$ vertices, our SDC has $\Oh(n)$ variables
and constraints, hence we can solve it in $\Oh(n^2)$ time using
Bellman--Ford.  This, however, tells us only whether a solution
exists.  In our case, this is not interesting since if we have a REL
$(L_1, L_2)$ of $G$, then we know that a rectangular dual exists.
We now show how to minimize the area and the perimeter of a
rectangular dual, by reducing the optimization problem to the decision
problem.  (Kant and He~\cite{KH97} do not show this explicitly, but
for a given REL, their linear-time algorithm yields a rectangular
dual of minimum area and perimeter.  They posed the open question
whether it is possible to find a rectangular dual of minimum area or
perimeter for a given PTP graph if the REL is not fixed.  In any case,
our slower LP is more flexible in that we can easily add additional
constraints concerning, e.g., the size or position of rectangles.)

Note that the horizontal and vertical dimensions of a rectangular dual
can be treated independently if the REL is fixed.  Hence, it suffices
to independently minimize the width and the height of the rectangular
dual.  If we set $x_{1,\vSouth} = 0$, then the
x-coordinate~$x_{2,\vNorth}$ of the right side of~\vNorth equals the
width of the rectangular dual.  Hence, we need to
minimize~$x_{2,\vNorth}$.  To this end, we do a binary search
on~$x_{2,\vNorth}$ and solve the SDC for each value of~$x_{2,\vNorth}$
that the binary search considers.  If we set $\eps = 1$, all
constraints use integer values.  Then the Bellman--Ford algorithm will
yield a solution (that is, a rectangular dual) with integer
coordinates.  Note that the minimum width of such a rectangular dual
is bounded by~$n$ 
(since every integer between~0 and $n$ must be used by the
x-coordinate of at least one left rectangle side and one right rectangle side). 
Hence, the binary search does $\Oh(\log n)$ steps, and we can solve the
optimization problem in $\Oh(n^2\log n)$ time. 

\paragraph{Partial Representation Extension.}
Let $\P$ be a partial rectangular dual of $G$.
With a slight modification, we can use \RecDual$(G, (L_1, L_2), \eps)$
to decide whether $\P$ admits a rectangular dual extension for $G$ and $(L_1, L_2)$.
More precisely, for each fixed vertex $u$ of $\P$, 
we set $x_{1,u}$, $x_{2,u}$, $y_{1,u}$, and $y_{2,u}$ according to the fixed rectangle $\R(u)$ of $\P$.
We call this LP \RepEx$(G, (L_1, L_2), \P, \eps)$.

We need to set $\eps$ such that rectangles can be placed between fixed rectangles
with sizes and overlaps of at least $\eps$.
Let $X$ be the set of x-coordinates $x_{1,u}$ and $x_{2,u}$ of all fixed vertices $u$.
Define $Y$ analogously.
Let $d$ be the minimum distance between any pair in $X$ or in $Y$.
Observe that in a rectangular dual of~$G$ at most $n$ distinct x-coordinates
are used for the x-coordinates of the left and right sides of the rectangles.
The analogous statement holds for the y-coordinates.
Hence, we get the following lemma.

\begin{lemma}
  \label{clm:lp:epsolution}
  If there exists any $\eps > 0$ such that
  \RepEx$(G, (L_1, L_2), \P, \eps)$ has a solution, then
  \RepEx$(G, (L_1, L_2), \P, d/n)$ has a solution as well.
\end{lemma}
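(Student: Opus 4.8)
The plan is to show that any valid solution can be "rescaled" so that the inter-fixed-rectangle gaps are large enough to accommodate the non-fixed rectangles, while leaving the fixed rectangles untouched. Concretely, suppose \RepEx$(G, (L_1, L_2), \P, \eps)$ has a solution for some $\eps > 0$, and consider just the x-coordinates (the y-coordinates are handled symmetrically, and the two directions are independent since the SDC has no constraints mixing $x$- and $y$-variables). The solution assigns to every vertex $u$ values $x_{1,u} \le x_{2,u}$, with the fixed vertices taking their prescribed values. The key observation is that the fixed x-coordinates partition the real line into at most $\abs{X}-1 \le 2h-1$ open intervals (plus two unbounded rays, which are irrelevant since everything lies inside $\R(\vSouth)\cup\dots$), each of width at least $d$, and each non-fixed x-coordinate of the solution lies in one of these intervals or coincides with a fixed coordinate.

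First I would define, for each consecutive pair $a < b$ of values in $X$ (so $b - a \ge d$), a piecewise-linear monotone bijection $\varphi$ of $\mathbb{R}$ that is the identity on $X$ and that redistributes each interval $(a,b)$ evenly: within $(a,b)$, the solution uses at most $n$ distinct x-coordinates, so $\varphi$ can be chosen affine on $[a,b]$ scaling the spacing up by a factor of at least $(b-a)/(b-a) $—wait, more carefully, $\varphi$ simply stretches $[a,b]$ to itself but this does not increase length; the honest move is different. Instead, I would not rescale within fixed gaps but rather argue directly: replace $\varphi$ by the map that on each maximal interval between consecutive used coordinates of the solution acts affinely so that consecutive used coordinates become spaced by at least $d/n$, provided the total does not overshoot the next fixed coordinate. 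Since a fixed gap of width $\ge d$ contains at most $n$ used coordinates hence at most $n-1$ internal sub-gaps, setting each sub-gap to width exactly $d/n \le d/n$ keeps the total within $(n-1)\cdot d/n < d$, so the endpoints stay put and the fixed coordinates are preserved. Applying $\varphi$ to all x-coordinates of the solution yields new values that still satisfy: (a) the equality constraints $x_{2,u} = x_{1,v}$ for $(u,v)\in E(L_2)$, because $\varphi$ is a function so equal inputs give equal outputs; (b) the fixed-vertex constraints, since $\varphi$ fixes $X$ pointwise; and (c) the $\ge d/n$ width and overlap inequalities, because any original strict inequality $p < q$ between used coordinates becomes $\varphi(p) \le \varphi(q) - d/n$ by construction, as $p$ and $q$ are separated by at least one sub-gap of the rescaling.

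The main obstacle I anticipate is condition (c): I must be sure that every quantity that the SDC requires to be $\ge \eps$ (a rectangle width $x_{2,u}-x_{1,u}$, or a horizontal overlap $x_{2,v}-x_{1,u}$ between $L_1$-neighbors) is in the original solution a genuine difference of \emph{two distinct} used coordinates, so that $\varphi$ spreads them apart by at least one full sub-gap of width $d/n$. This holds because $\eps > 0$ forces the two coordinates to differ, and $\varphi$ is defined so that the image of any nonempty open interval between consecutive distinct used coordinates has length $\ge d/n$; summing over the $\ge 1$ sub-gaps separating the two coordinates gives the bound. One edge case to check is when the two coordinates lie in different fixed gaps (straddling a fixed coordinate): then the separation is even larger, since it already includes a contribution of at least $d/n$ from at least one sub-gap on one side. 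After handling the x-direction, the identical argument for the y-direction and the observation that the combined assignment solves \RepEx$(G,(L_1,L_2),\P,d/n)$ completes the proof. \qed
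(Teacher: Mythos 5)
Your argument is correct and is essentially the paper's own (the paper leaves the lemma as an immediate consequence of the observation that a rectangular dual uses at most $n$ distinct x-coordinates, so each gap of width $\ge d$ between fixed coordinates can accommodate them at spacing $d/n$). Your monotone rescaling map $\varphi$ that fixes $X$ pointwise and re-spaces the used coordinates within each gap is just a rigorous spelling-out of that same idea; the mid-proof self-correction should be edited out, but the final construction is sound.
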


The SDC \RepEx$(G, (L_1, L_2), \P, d/n)$ for the partial rectangular dual extension problem has
a worse asymptotic running time than the combinatorial methods of the subsequent sections.	
However, given $G$, $(L_1, L_2)$, and $\P$, 
the SDC is easy to generate and fast commercials solvers can be used.
Furthermore, we can solve a slightly more general problem with the LP.
For example, instead of specifying the position \textit{and} the size of each fixed rectangle,
we have the freedom to specify (or to bound) only some of these parameters.

\paragraph{Simultaneous Representation.}

Next, we describe how to use the SDC to solve the simultaneous
rectangular dual representation problem when several graphs (sharing
vertices) and their RELs are given.
The idea is to generate one SDC per graph and then, 
for each vertex shared between two graphs,
to identify the respective variables.
Therefore, if two PTP graphs $G$ and $G'$ share a vertex~$u$,
then the rectangles $\R(u)$ and $\R'(u)$ in the rectangular duals of $G$ and $G'$, respectively,
will be identical.

More formally, let $G_1,\dots,G_k$ be PTP graphs 
with RELs $(L_1, L_2)_1$, $\dots$, $(L_1, L_2)_k$, respectively.
For distinct $i$ and $j$ in $\set{1, \ldots, k}$,
let $H_{i,j}$ be the common subgraph of $G_i$ and $G_j$.
For each $i \in \set{1, \ldots, k}$, we generate the SDC \RepEx$(G_i, (L_1, L_2)_i, 1)$
and let its variables have superscript $i$.
We then merge the $k$ SDCs into a single SDC. 
To ensure a simultaneous representation, for each~$u \in V(H_{i, j})$,
we set $x^i_{1,u} = x^j_{1,u}$, $x^i_{2,u} = x^j_{2,u}$, $y^i_{1,u} = y^j_{1,u}$, and $y^i_{2,u} = y^j_{2,u}$.
The result is an SDC of size linear in the total number of vertices of
$G_1,\dots,G_k$.  This yields the following theorem.

\begin{theorem} \label{clm:simultaneous}
The simultaneous representation problem for rectangular duals
with a fixed regular edge labeling can be solved in quadratic time. 
For yes-instances,
simultaneous rectangular duals can be constructed within the same time bound.
\end{theorem}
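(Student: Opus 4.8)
The plan is to verify that the merged system of difference constraints (SDC) described just above is correct and of linear size; the quadratic running time then follows from the Bellman--Ford bound recalled in \Cref{sec:lp}. Throughout we may assume that each $G_i$ is a PTP graph and that each $(L_1, L_2)_i$ is a valid REL of~$G_i$, since otherwise the instance is trivially negative and this is checkable in linear time.

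First I would bound the size. Each sub-SDC \RepEx$(G_i, (L_1, L_2)_i, 1)$ uses $\Oh(\abs{V(G_i)})$ variables (four per vertex) and, because $G_i$ is planar and hence has $\Oh(\abs{V(G_i)})$ edges, also $\Oh(\abs{V(G_i)})$ constraints. Identifying the four pairs of coordinate variables of a vertex occurring in $t$ of the input graphs takes $\Oh(t)$ additional equality constraints, chained through those graphs; writing each equality $a = b$ as the two SDC rows $a - b \le 0$ and $b - a \le 0$ keeps the merged system an SDC. Since the values $t$ sum to $N := \sum_i \abs{V(G_i)}$, the merged SDC has $\Oh(N)$ variables and $\Oh(N)$ constraints, so Bellman--Ford solves it in $\Oh(N^2)$ time and returns an explicit (integral) assignment whenever it is feasible.

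Next I would establish that the merged SDC is feasible if and only if the instance is positive. For one direction, from a feasible assignment, its restriction to the variables carrying superscript~$i$ satisfies precisely the constraints of \RepEx$(G_i, (L_1, L_2)_i, 1)$, so by the correctness of \RecDual from \Cref{sec:lp} it encodes a rectangular dual $\R_i$ of $G_i$ realizing $(L_1, L_2)_i$; the identification constraints then force $\R_i(u) = \R_j(u)$ for every $u \in V(G_i) \cap V(G_j)$, making $\R_1, \dots, \R_k$ simultaneous rectangular duals. For the other direction, suppose simultaneous rectangular duals $\R_1, \dots, \R_k$ exist and let $\delta > 0$ be the smallest feature size occurring in any of them --- the minimum over all rectangle widths and heights and over the lengths of all contacts, taken over all $\R_i$ --- which is positive because only finitely many rectangles are involved. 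Scaling every coordinate of every $\R_i$ by $1/\delta$ is one uniform scaling of the plane, so it preserves planarity, the contact structure, and the induced REL of each $\R_i$, preserves the agreement on shared vertices, and makes every feature size at least~$1$; hence the scaled coordinates form a feasible assignment of the merged SDC (with $\eps = 1$).

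Combining these, Bellman--Ford decides the instance in $\Oh(N^2)$ time, and on a positive instance its output assignment is read off directly as the rectangles of the simultaneous duals $\R_1, \dots, \R_k$ in $\Oh(N)$ additional time. I expect the scaling normalization in the second direction to be the only step that is not pure bookkeeping: it is what lets us fix $\eps = 1$ even though an arbitrary simultaneous solution might use rectangles and overlaps that are arbitrarily small. The remaining work --- that identifying variables leaves the system an SDC, and that its size stays linear even when a vertex is shared by many graphs --- is routine, and relies on the already-established correctness of \RecDual.
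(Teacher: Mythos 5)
Your proposal follows essentially the same route as the paper: build one SDC per input graph, merge them by identifying the coordinate variables of shared vertices, observe the merged system is still an SDC of linear size, and solve it with Bellman--Ford in quadratic time. The extra details you supply --- writing the identifications as pairs of difference constraints and the uniform-scaling argument that justifies fixing $\eps = 1$ in the absence of prescribed rectangles --- are reasonable fill-ins of steps the paper leaves implicit, not a different approach.
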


\section{Characterization}
\label{sec:characterization}

In this section, we characterize when a given PTP graph~$G$ with REL
$(L_1, L_2)$, and partial rectangular dual~$\P$ of $G$ admits an extension~$\R$ that realizes~$(L_1, L_2)$.
Before we can explain our main idea, we require an observation and a few definitions.

We may assume
that $\vWest$, $\vSouth$, $\vEast$, and $\vNorth$ are fixed vertices of $\P$.
(Otherwise, we simply place the outer rectangles $\P(\vWest)$, $\P(\vSouth)$, $\P(\vEast)$, and
$\P(\vNorth)$ appropriately around $\P$ 
such that they touch potential neighbours in $\P$.) 
The rectangles $\P(\vWest)$, $\P(\vSouth)$, $\P(\vEast)$, and
$\P(\vNorth)$ thus form a \emph{frame} with the area inside
partially covered and partially uncovered. 
To make the question of whether this uncovered area can be filled with the rectangles of non-fixed vertices
more accessible, we subdivide the uncovered area into smaller parts
and then try to fill them one by one.
More precisely and as illustrated in \cref{fig:strips},
we draw a vertical segment through the vertical sides of each fixed rectangle of an inner vertex
until another fixed rectangle is hit.
This divides the uncovered area inside the frame into (non-empty) \emph{vertical strips}.
We call the fixed rectangles bounding a vertical strip $S$ from below and above
the \emph{start} and \emph{end} rectangles of~$S$, respectively. 
We define \emph{horizontal strips} symmetrically. 
The start and end rectangle of a horizontal strip are to its left and right, respectively.

\begin{figure}[b]
  \centering
  \includegraphics{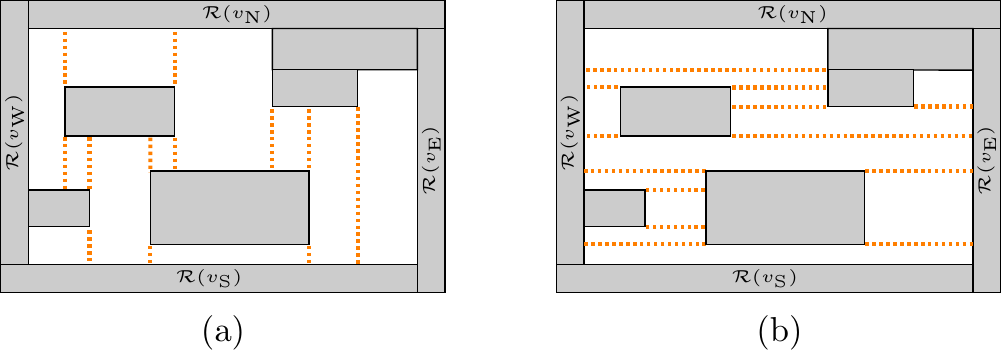}
  \caption{Dissection of the interior of the frame into (a) vertical and (b) horizontal strips.}
  \label{fig:strips}
\end{figure}

The idea for our characterization is as follows.
Consider an extension~$\R$ of~$\P$, where the strips are now filled with rectangles.
The vertical strips thus naturally induce subgraphs in $L_1(G)$ containing the vertices
that intersect it plus their start and end rectangle. 
Together, these subgraphs cover the whole of $L_1(G)$.
In particular, for a vertical strip $S$ with start rectangle~$\P(u)$ and end rectangle~$\P(v)$, 
the outer face of its induced subgraph 
consists of a path containing the rectangles along the left side of $S$
and a path along the right side of $S$.
The idea is that, even with $\R$ not known, we have to be able to
cover~$L_1(G)$ (and~$L_2(G)$) with subgraphs defined by pairs of boundary paths.
We now make this precise.

For two paths~$P$ and~$P'$ in~$L_1(G)$, we
write $P \preceq P'$ if no vertex of $P$ lies to the right of~$P'$, i.e.,
there is no path from a vertex in $P'$ to a vertex in $P\setminus P'$ in $L_2(G)$.
Let $S$ be a vertical strip with start rectangle~$\P(u)$ and end rectangle~$\P(v)$.
A \emph{boundary path pair} of $S$ is a pair of paths
$\langle\Pl(S), \Pr(S)\rangle$ from $u$ to $v$ in $L_1(G)$ 
such that $\Pl(S) \preceq \Pr(S)$
and the only fixed vertices in $V(\Pl(S) \cup \Pr(S))$ are $u$ and $v$; see \cref{fig:pathpairs}(a).
Based on the boundary path pair of $S$, 
we define $S(L_1(G))$ as the maximal subgraph of $L_1(G)$ that has precisely $\Pl(S)$ and $\Pr(S)$ as the boundary of the outer face.
The definitions for horizontal strips, where we order paths $\Pb(S)$ and $\Pt(S)$ from bottom to top, are analogous.

\begin{figure}[tb]
  \centering
  \includegraphics{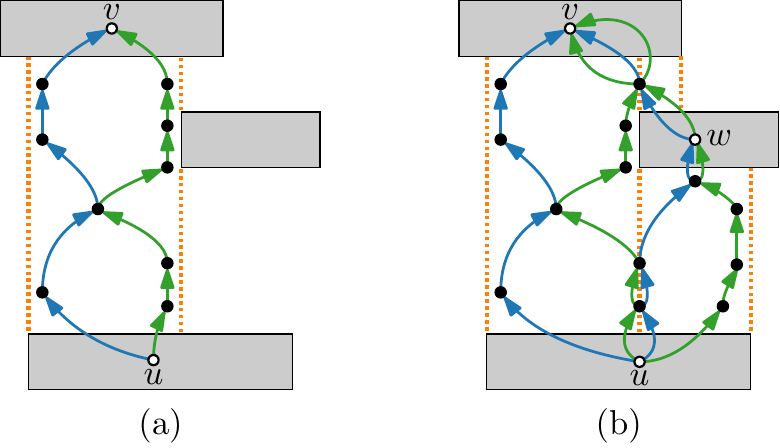}
  \caption{(a)~A boundary path pair for the strip with start rectangle $\R(u)$ and
    end rectangle $\R(v)$. (b)~Neighboring strips can have overlapping boundary
    paths.}
  \label{fig:pathpairs}
\end{figure}

Let $\mathcal{S}_1$ and $\mathcal{S}_2$ be the sets of vertical and horizontal strips, respectively.
We define a \emph{boundary path set} of a REL $(L_1, L_2)$ as a set of boundary path pairs,
one for each strip in $\mathcal{S}_1$ and $\mathcal{S}_2$, that satisfy
the following properties (see \cref{fig:pathpairs}(b)):
\begin{enumerate}[label=(B\arabic*),leftmargin=*]
  \item For strips $S$ and $S'$ in $\mathcal{S}_1$ with $S$
    left of~$S'$, it holds that $\Pr(S) \preceq \Pl(S')$.
    \label{enum:left}
  \item For strips $S$ and $S'$ in $\mathcal{S}_2$ with $S$
    below~$S'$, it holds that $\Pt(S) \preceq \Pb(S')$.
    \label{enum:below}
  \item The vertical strips cover $L_1(G)$, and the horizontal strips
    cover $L_2(G)$, that is, $\bigcup_{S \in \mathcal{S}_1} S(L_1(G)) = L_1(G)$
    and $\bigcup_{S \in \mathcal{S}_2} S(L_2(G)) = L_2(G)$.
    \label{enum:cover}
\end{enumerate}
Note that boundary paths of neighboring strips may overlap.

An extension $\R$ of $\P$ directly induces a boundary path set;
for each, say, vertical strip we simply walk through the rectangles along its left and right boundary
to find its boundary path pair.   
In the following, we show that the converse is also true.

\begin{theorem}\label{clm:characterization}
Let $G$ be a PTP graph and let $\P$ be a partial rectangular dual of $G$.
A REL $(L_1, L_2)$ of $G$ admits an extension of $\P$
if and only if $(L_1, L_2)$ admits a boundary path set. 
\end{theorem}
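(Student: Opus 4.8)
The plan is to prove both directions, with the forward direction being essentially the observation already noted in the text (an extension $\R$ induces a boundary path set by walking along the boundaries of each strip), so the real work is the converse: given a boundary path set, construct an extension $\R$ of $\P$ realizing $(L_1,L_2)$. The natural tool is the LP \RepEx$(G,(L_1,L_2),\P,\eps)$ from \Cref{sec:lp}; by \Cref{clm:lp:epsolution} it suffices to show this SDC is feasible for some $\eps>0$. An SDC $Ax\le b$ is feasible if and only if its constraint graph has no negative cycle, so I would first reformulate everything in that language: vertices are the LP variables (x-coordinates $x_{1,u},x_{2,u}$ and y-coordinates $y_{1,u},y_{2,u}$), fixed vertices contribute equality constraints pinning their values, and each edge of $L_1(G)$ and $L_2(G)$ contributes difference constraints. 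Because there are no constraints mixing x- and y-variables, the x-part and y-part decouple, and by the left–right/bottom–top symmetry it suffices to handle the x-variables, i.e.\ to certify feasibility of the horizontal subsystem using the covering by vertical strips together with property \ref{enum:left}.

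The key step is to build an explicit feasible assignment of x-coordinates from the boundary path set. First I would fix the vertical strips $S_1,\dots,S_t$ ordered left to right (so $S_i$ is left of $S_{i+1}$); property \ref{enum:left} gives $\Pr(S_i)\preceq\Pl(S_{i+1})$, and property \ref{enum:cover} says $\bigcup_i S_i(L_1(G))=L_1(G)$, so every vertex lies in at least one strip subgraph. For each strip $S=S_i$ with start $u$ and end $v$, the subgraph $S(L_1(G))$ is (by definition) the maximal subgraph of $L_1(G)$ whose outer boundary is $\Pl(S)\cup\Pr(S)$; it is an internally planar $st$-subgraph from $u$ to $v$, and because it is exactly the region ``between'' the two paths it admits a drawing consistent with the fixed rectangles $\P(u),\P(v)$ and occupying precisely the x-extent of $S$. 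Concretely, for each strip I would invoke the Kant–He construction (or equivalently solve \RecDual on $S(L_1(G))$ with its induced REL, with $\P(u)$ and $\P(v)$ spanning the horizontal width assigned to that strip) to obtain candidate x-coordinates for the vertices inside $S$, scaled to fit within $[\,\text{left of }S,\ \text{right of }S\,]$ with a uniform slack $\eps$. The $\preceq$ relations in \ref{enum:left} guarantee that a vertex lying in two consecutive strips $S_i$ and $S_{i+1}$ gets consistent x-coordinates (its rectangle straddles the dividing vertical line), so the per-strip assignments glue into a single global x-assignment; doing the symmetric thing with horizontal strips and property \ref{enum:below} yields a global y-assignment. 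I would then verify that the resulting $(x,y)$ satisfies all constraints of \RepEx: the width/height $\ge\eps$ constraints hold by construction, the $L_2$-contact equalities $x_{2,u}=x_{1,v}$ hold because within each vertical strip (and across strip boundaries, using that $L_2$ edges go ``horizontally'' between consecutive columns) the red edges are realized, and the $L_1$-overlap inequalities for the first/last outgoing and incoming $L_1$-neighbors hold because those neighbors lie in a common strip whose subgraph is drawn as a genuine rectangular dual; the fixed vertices are respected because each strip's construction is anchored to its start and end rectangles. Finally, a feasible LP solution is, by the correctness of \RecDual discussed in \Cref{sec:lp}, exactly a rectangular dual $\R$ realizing $(L_1,L_2)$, and it extends $\P$ since the fixed variables were pinned.

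I expect the main obstacle to be the gluing/consistency argument: showing that the separately-built per-strip coordinate assignments actually agree on shared vertices and that no negative cycle (equivalently, no contradictory chain of difference constraints) can arise when combining the vertical-strip data for the x-variables — in particular, ruling out a ``cyclic'' conflict where strip $S_i$ forces a vertex too far right while strip $S_{i+1}$ forces it too far left. This is precisely where properties \ref{enum:left} and \ref{enum:cover} must be used in full force: $\preceq$ is a transitive relation coming from reachability in $L_2(G)$, so the left-to-right ordering of strips together with $\Pr(S_i)\preceq\Pl(S_{i+1})$ yields a consistent linear order on x-coordinates, and \ref{enum:cover} ensures no vertex is ``orphaned'' and thus underconstrained in a way that could later clash. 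A secondary technical point is handling overlapping boundary paths of neighboring strips (as in \cref{fig:pathpairs}(b)): a vertex on $\Pr(S_i)=\Pl(S_{i+1})$ belongs to both strip subgraphs, and I must check its four coordinates are assigned identically by both — which follows because both strips place its rectangle on the common vertical dividing segment. Once these consistency facts are nailed down, the verification that every constraint of \RepEx holds is routine casework on edge colors and orientations, and \Cref{clm:lp:epsolution} upgrades ``feasible for this particular small $\eps$'' to the clean statement.
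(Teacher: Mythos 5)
Your proposal is correct in both directions in spirit, but the converse takes a genuinely different route from the paper. The paper never returns to the LP here: it intersects the vertical and horizontal strips into \emph{boxes}, fills the boxes from bottom-left to top-right, runs Kant--He on the full two-colored subgraph $G_B$ bounded by the four boundary-path fragments around each box, and then stretches each small dual vertically (to match the box to its left, via \ref{enum:left}) and horizontally (to match the fixed rectangle below, via \ref{enum:cover}). You instead decouple the axes --- x-coordinates from the vertical strips, y-coordinates from the horizontal strips --- and certify feasibility of \RepEx{} via \cref{clm:lp:epsolution}. The decoupling is legitimate (the paper itself observes that the two coordinate systems are independent once the REL is fixed), and your route essentially anticipates the paper's own \cref{clm:H1}, where $H_1$ and $H_2$ are processed separately to get the linear-time bound; what it buys is a clean reduction to one axis and a direct bridge to the LP machinery, at the price of a more delicate gluing argument. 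Two points need more care than you give them. First, running Kant--He ``on $S(L_1(G))$'' is imprecise: that object is a subgraph of the blue graph only, so you must pass the induced subgraph of $G$ on its vertex set, enclosed in a 4-cycle, with the restricted REL --- this is also what makes $\Pl(S)$ flush with the left side and $\Pr(S)$ flush with the right side of the strip. Second, and more substantively, the binding x-constraints are the \emph{equalities} $x_{2,u}=x_{1,v}$ for red edges, including red edges from fixed rectangles flanking $S$ into vertices of $S$ (the left-/right-bounded vertices of \cref{sec:algorithms}); a per-strip drawing ``scaled with uniform slack $\eps$'' does not automatically place such a vertex flush against the strip boundary, so you must argue that in any boundary path set every such vertex lies on $\Pl(S)$ or $\Pr(S)$ (this is where \ref{enum:cover} and the maximality of $S(L_1(G))$ have to be invoked). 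With those two points supplied your argument closes, and it sits at roughly the same level of rigor as the paper's own stretching argument --- the hard step is merely relocated from ``stretch along suitable cuts'' to ``glue per-strip assignments consistently.''
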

\begin{proof}
Suppose that $(L_1, L_2)$ admits a boundary path set.  We show how to
use this set to construct an extension of~\P. 

Let $S$ be a vertical strip, let~$S'$ be a horizontal strip, and assume
that $B = S \cap S'$ is nonempty. We call $B$ a \emph{box}.
All such boxes together with all fixed rectangles form a rectangle.
We now fill the boxes from the bottom-left to the top-right.

The paths in the pairs $\langle\Pl(S), \Pr(S)\rangle$ and
$\langle\Pb(S'), \Pt(S')\rangle$ pairwise intersect in single vertices
$v_{\mathrm{l}\mathrm{b}}$, $v_{\mathrm{l}\mathrm{t}}$,
$v_{\mathrm{r}\mathrm{b}}$, $v_{\mathrm{r}\mathrm{t}}$.
Note that some or even all of these four vertices may coincide. 
Let $G_B$ be the subgraph of~$G$ whose outer cycle is formed by the
boundary path pairs between these vertices; see \cref{fig:boxes}(a).
If we enclose~$G_B$ appropriately with a 4-cycle, we get a PTP graph with a REL
and can apply the
algorithm of Kant and He~\cite{KH97}
to compute a rectangular dual~$\R_B$ of~$G_B$.

\begin{figure}[t]
  \centering
  \includegraphics{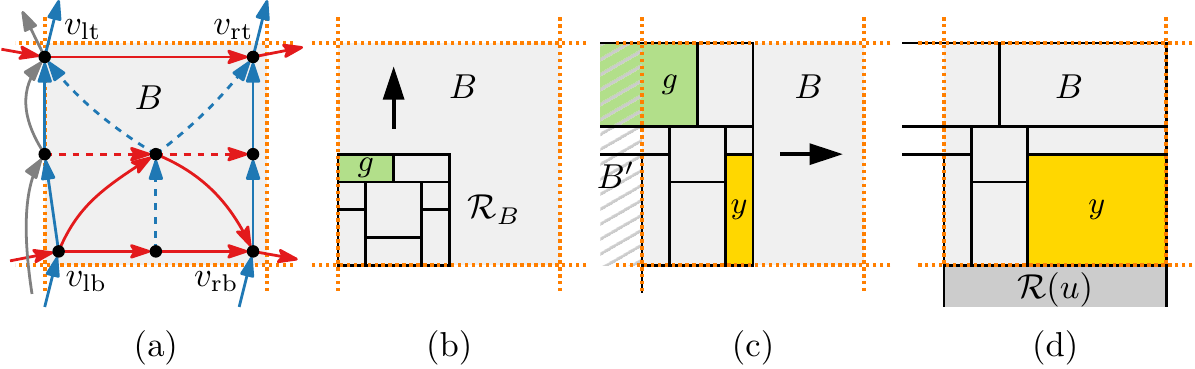}
  \caption{(a) Graph $G_B$ for a box $B$; (b)~representation $\R_B$
    for $G_B$; (c)~adjusting the left boundary of~$\R_B$ to~$\R_{B'}$;
    and (d)~adjusting the bottom boundary to~$\R(u)$.}
  \label{fig:boxes}
\end{figure}

By the order in which we fill boxes, we have already treated those
immediately to the left and below~$B$; either of them may also be
a fixed rectangle.  Without loss of generality, we assume
that there is a box~$B'$ that touches~$B$ from the left and a fixed
rectangle that touches~$B$ from below.  

First, we modify~$\R_B$ such that it fits to the rectangular
dual~$\R_{B'}$ that is drawn inside of~$B'$.  
Property~\ref{enum:left} of a boundary path set ensures that the rectangles
in~$\R_{B'}$ that are adjacent to the right side of~$\R_{B'}$ are
``compatible'' to the rectangles in~$\R_B$ that are adjacent to the
left side of~$\R_B$.  Hence, starting with a tiny version of $\R_B$
placed in the lower left corner of~$B$, we can stretch~$\R_B$
vertically along suitable horizontal cuts such that, for every vertex~$u$ in~$V(G_{B'}) \cap V(G_B)$,
the left piece of~$\R(u)$ (in~$B'$) and the right piece of~$\R(u)$
(in~$B$) fit together; see the green rectangle $g$ in
\cref{fig:boxes}(b)--(c).

Now suppose that, for some fixed vertex $u$, the fixed
rectangle~$\P(u)$ bounds~$B$ from below.  Property~\ref{enum:cover} of a
boundary path set ensures that if we stretch~$\R_B$ horizontally along some vertical cut, then
we have the correct horizontal contacts with~$\P(u)$; see the
yellow rectangle $y$ in \cref{fig:boxes}(c)--(d).

Finally, note that property~\ref{enum:cover} ensures that, at the
end of this construction, every vertex of~$G$ is represented by a
rectangle in~\R.
\end{proof}

Note that the existence of a boundary path set for a REL $(L_1, L_2)$ 
does not depend on the numeric values of the x- and y-coordinates of the fixed rectangles in $\P$.
In fact, we get the following corollary.
\begin{corollary} \label{clm:coordinateorder}
Let $G$ be a PTP graph and let $\P$ be a partial rectangular dual of~$G$.
Whether a REL $(L_1, L_2)$ of $G$ admits an extension of $\P$
depends only on the order of the first and second x-coordinates 
and the order of the first and second y-coordinates of the fixed rectangles in $\P$.
\end{corollary}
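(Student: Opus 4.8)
The plan is to derive \Cref{clm:coordinateorder} as an immediate consequence of \Cref{clm:characterization}. By that theorem, a REL $(L_1,L_2)$ admits an extension of $\P$ if and only if it admits a boundary path set. So it suffices to argue that whether a boundary path set exists depends only on the combinatorial data listed in the corollary: the linear order of the $x$-coordinates $\set{x_{1,u}, x_{2,u} : u \text{ fixed}}$ and the linear order of the $y$-coordinates $\set{y_{1,u}, y_{2,u} : u \text{ fixed}}$ (including which coincide).

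First I would observe that all the ingredients in the definition of a boundary path set are purely combinatorial \emph{given the strip structure}: the relations $\preceq$ between paths, the notions $\Pl, \Pr, \Pb, \Pt$, the induced subgraphs $S(L_1(G))$, and properties~\ref{enum:left}--\ref{enum:cover} are all defined in terms of $L_1(G)$ and $L_2(G)$ and the assignment of fixed vertices to strips; no numeric coordinate ever enters. Hence the only place where the geometry of $\P$ could matter is in the definition of the strips $\mathcal{S}_1$ and $\mathcal{S}_2$ themselves, together with the ``start/end'' and ``left/right'' roles of the fixed rectangles bounding each strip. So the whole corollary reduces to showing that these two pieces of data — the set of vertical strips with their bounding rectangles, and likewise for horizontal strips — are determined by the two coordinate orders.

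Next I would make that last claim precise. A vertical strip is obtained by sweeping a vertical line through each vertical side of a fixed inner rectangle until another fixed rectangle is hit; the resulting maximal uncovered vertical slabs are the strips. Two fixed rectangles $\P(u)$ and $\P(v)$ overlap horizontally iff the intervals $[x_{1,u},x_{2,u}]$ and $[x_{1,v},x_{2,v}]$ overlap, which is decided by the order of the $x$-coordinates; which of $\P(u),\P(v)$ is below the other is decided by the order of the $y$-coordinates (in a partial rectangular dual two rectangles have disjoint interiors, so two horizontally overlapping fixed rectangles are comparable in the vertical order). The cut positions (the $x$-coordinates of vertical sides of inner fixed rectangles) partition the $x$-axis into elementary columns, and within each column the vertical stacking of the fixed rectangles touching it is exactly read off from the $y$-order; a vertical strip is a maximal run of consecutive elementary columns having the same start rectangle below and the same end rectangle above, with no fixed rectangle in between. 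All of this data — which columns exist, which fixed rectangles occupy each column, their vertical order, and hence where the strips are and what their start/end rectangles are — is a function of the two coordinate orders alone. The symmetric statement holds for horizontal strips by exchanging the roles of $x$ and $y$.

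Therefore two partial rectangular duals $\P$ and $\P'$ of $G$ that induce the same $x$-order and the same $y$-order give rise to combinatorially identical strip systems (same strips, same start/end rectangles), hence to the same set of candidate boundary path pairs and the same instances of properties~\ref{enum:left}--\ref{enum:cover}; so $(L_1,L_2)$ admits a boundary path set for $\P$ iff it does for $\P'$, and by \Cref{clm:characterization} it admits an extension for $\P$ iff it does for $\P'$. I expect the only mildly delicate point to be spelling out cleanly why the strip decomposition is recoverable from the coordinate orders — in particular handling ties (coincident coordinates) and the special treatment of the four outer rectangles — but this is a routine book-keeping argument with no real obstacle; the mathematical content is entirely carried by \Cref{clm:characterization}.
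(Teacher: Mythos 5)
Your proposal is correct and follows exactly the route the paper takes: the paper derives the corollary as an immediate consequence of \Cref{clm:characterization}, observing in one sentence that the existence of a boundary path set does not depend on the numeric coordinates, only on their orders. Your write-up merely fills in the (routine) detail that the strip decomposition itself is recoverable from the two coordinate orders, which the paper leaves implicit.
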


\begin{figure}[tb]
	\centering 
	\includegraphics{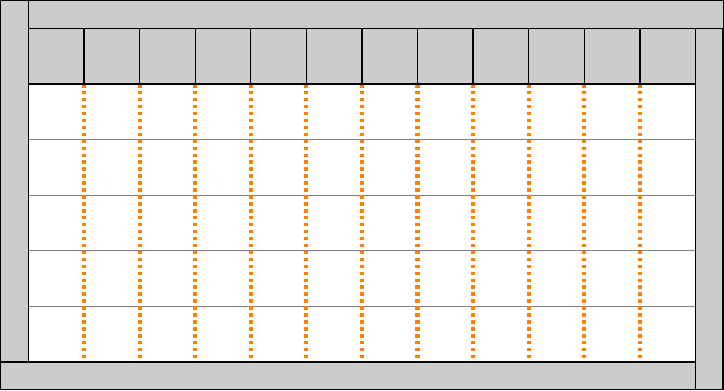}
	\caption{A rectangular dual with a boundary path set of size $\Omega(nh)$ but only five unfixed vertices.} 
	\label{fig:largeBoundaryPathSet}
\end{figure}
We close this section with an observation about the potential size of
boundary path sets.
As we have noted above, a vertex may lie on multiple boundary paths;
in fact, it may even lie on all of them as the example in \cref{fig:largeBoundaryPathSet} shows.
Hence, the size of a boundary path set can be in $\Omega(nh)$, where
$n = \abs{ V(G) }$ and $h = \abs{ \P }$.

\section{Finding a Boundary Path Set}
\label{sec:algorithms}

We now show how to compute a boundary path set for a given REL $(L_1, L_2)$ and a partial representation $\P$.
The idea is as follows.
As we did for the boxes in the proof of \cref{clm:characterization}, 
we handle the vertical strips in~$\mathcal{S}_1$ from bottom-left to top-right.
When computing the boundary path pair for a vertical strip $S \in \mathcal{S}_1$, 
we want the resulting graph $S(L_1(G))$ to include all necessary vertices but otherwise as few vertices as possible.
In particular, there may be rectangles that by $(L_1, L_2)$ need to have 
their left boundary align with the left boundary of $S$ and thus need to be in $\Pl(S)$.
To make this more precise, let $\P(v_1), \P(v_2), \ldots, \P(v_k)$ be
the fixed rectangles whose right sides touch the left side of~$S$.
Let $u$ and $v$ be the fixed vertices corresponding to the start and end rectangle of $S$, respectively.
Let $x$ be a vertex that lies on a path from $u$ to $v$ in $L_1(G)$. 
Then we say $x$ is \emph{left-bounded} in~$S$
if and only if one of the following conditions applies (see
\cref{fig:leftrightbounded}(a)):
\begin{enumerate}[label=(L\arabic*),leftmargin=*]
  \item \label{cond:lb:uv}
  $x = u$ or $x = v$ and the left side of $\P(x)$ aligns with the left side of $S$;
  \item \label{cond:lb:vi}
  $(v_i, x)$, for some $i \in \set{1, \ldots, k}$, is an edge in $L_2(G)$; 
  \item \label{cond:lb:above} 
  $(y, x)$ is the leftmost outgoing edge of $y$ and the leftmost incoming edge of $x$ in $L_1(G)$, and $y$ is left-bounded;
  \item \label{cond:lb:below}
  $(x, y)$ is the leftmost outgoing edge of $x$ and the leftmost incoming edge of $y$ in $L_1(G)$, and $y$ is left-bounded.
\end{enumerate}
\Cref{cond:lb:vi} applies if $\R(x)$ has to be directly to the right of a fixed rectangle left of $S$.
\Cref{cond:lb:above,cond:lb:below} apply if the left side of $\R(x)$ has to align 
with the left side of a left-bounded rectangle $R(y)$ directly below or above, respectively. 
Note that in this case there exists also a vertex $y'$ that is right-bounded in a strip $S'$ left of $S$ and $(y', x) \in E(L_2(G))$.

\begin{figure}[htb]
  \centering
  \includegraphics{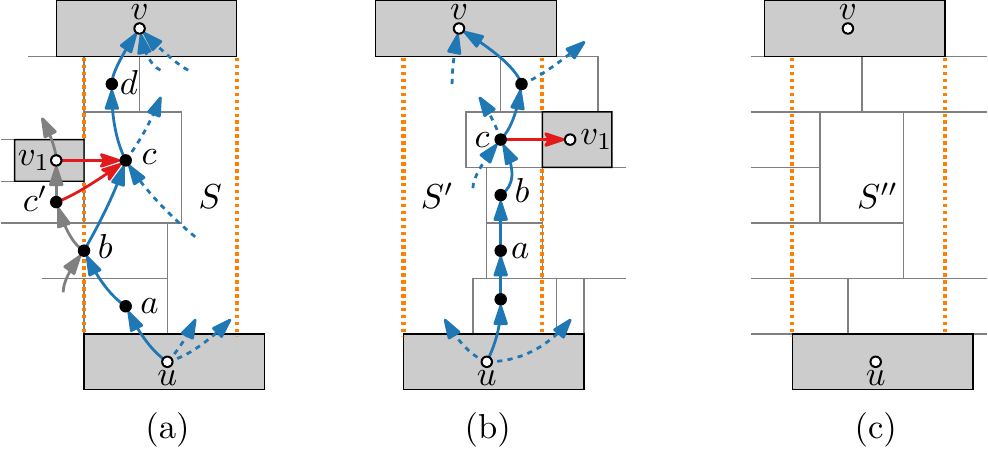}
  \caption{(a) Vertex $u$ is left-bounded in~$S$ by \cref{cond:lb:uv},
    $c$ on \cref{cond:lb:vi}, $a$ on \cref{cond:lb:above}, and $b$ is
    not left-bounded; (b)~vertex $c$ is right-bounded in~$S'$ by
    \cref{cond:rb:vi}, and $a$ and $b$ by \cref{cond:lb:below}; 
    (c)~$S''$ has neither left- nor right-bounded vertices except for
    $u$ and $v$.}
  \label{fig:leftrightbounded}
\end{figure}

Next, let $\P(v_1'), \P(v_2'), \ldots, \P(v_{k'}')$
be the fixed rectangles whose left sides touch the right side of~$S$.
Then $x$ is \emph{right-bounded} in~$S$ if and only if one of the
following conditions applies (see \cref{fig:leftrightbounded}(b)):
\begin{enumerate}[label=(R\arabic*),leftmargin=*]
  \item \label{cond:rb:uv}
  $x = u$ or $x = v$ and the right side of $\P(x)$ aligns with the right side of $S$; 
  \item \label{cond:rb:vi}
  $(x, v_i')$, for some $i \in \set{1, \ldots, k'}$, is an edge in $G_2$;
  \item \label{cond:rb:above}
  $(y, x)$ is the rightmost outgoing edge of $y$ and the rightmost incoming edge of $x$ in $L_1(G)$, and $y$ is right-bounded; 
  \item \label{cond:rb:below}
  $(x, y)$ is the rightmost outgoing edge of $x$ and the rightmost incoming edge of $y$ in $L_1(G)$, and $y$ is right-bounded.
\end{enumerate}
Note that $x$ can be both left- and right-bounded. 
Furthermore, starting from $u,v'_1,\dots,v'_{k'},v$,
these conditions can easily be checked for each strip.
Overall, we can thus find all left- and right-bounded vertices of all strips in $\Oh(n)$ time. 
\begin{theorem}\label{clm:algorithm}
Let $G$ be a PTP graph with $n$ vertices and REL $(L_1, L_2)$,
let $\P$ be a partial rectangular dual of $G$, and let $h = \abs{ \P }$.\\
In $\Oh(nh)$ time, we can decide whether $(L_1, L_2)$ admits a boundary path set 
with respect to $\P$ and, in the affirmative, compute it.
\end{theorem}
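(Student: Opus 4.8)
The plan is to process the vertical strips in $\mathcal{S}_1$ in the bottom-left to top-right order (the same order used in the proof of \cref{clm:characterization}), and for each strip $S$ with start vertex $u$ and end vertex $v$ to greedily construct the boundary path pair $\langle \Pl(S), \Pr(S) \rangle$ in a canonical way: take $\Pl(S)$ to be the leftmost $u$-$v$ path in $L_1(G)$ that passes through every vertex that is left-bounded in $S$, and take $\Pr(S)$ to be the rightmost $u$-$v$ path through every right-bounded vertex. First I would argue that ``leftmost path through a prescribed set of left-bounded vertices'' is well-defined: the left-bounded vertices of $S$ themselves lie on a common $u$-$v$ path in $L_1(G)$ (this follows from \cref{cond:lb:above,cond:lb:below}, which force consecutive leftmost-edge relationships, together with the fact that the $v_i$'s touch $S$ in order), so one can splice together the forced leftmost-edge chains with the canonical leftmost $u$-$v$ path of $L_1(G)$ between consecutive forced vertices. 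Symmetrically for $\Pr(S)$. The key claim is that $(L_1,L_2)$ admits a boundary path set if and only if this canonical choice is a valid boundary path set, because the canonical $\Pl(S)$ is $\preceq$-minimal and $\Pr(S)$ is $\preceq$-maximal among all admissible choices, so any violation of \ref{enum:left}, \ref{enum:below}, \ref{enum:cover}, or of $\Pl(S) \preceq \Pr(S)$ by the canonical pair forces the same violation for every pair; conversely if the canonical pair satisfies all the properties we are done.

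Next I would verify the three structural obstructions that can make a valid boundary path set fail to exist, and show each is detectable while building the canonical pair: (i) the canonical $\Pl(S)$ and $\Pr(S)$ might not satisfy $\Pl(S) \preceq \Pr(S)$ — i.e.\ a left-bounded vertex lies strictly right of a right-bounded vertex — which can be tested in $\Oh(n)$ per strip by a single sweep comparing the two paths; (ii) property \ref{enum:left} might fail, i.e.\ $\Pr(S) \not\preceq \Pl(S')$ for the strip $S'$ immediately to the right — but since $\Pl(S')$ is already forced to contain the vertices pushed onto it by right-bounded vertices of $S$ via the observation noted after \cref{cond:lb:below} (each $y'$ right-bounded in $S$ gives an edge $(y',x) \in E(L_2(G))$ making $x$ left-bounded in $S'$), the canonical construction already ``propagates'' the boundary and this check reduces to comparing $\Pr(S)$ against $\Pl(S')$ directly; (iii) property \ref{enum:cover} might fail, i.e.\ some vertex of $L_1(G)$ lies in no $S(L_1(G))$, equivalently falls strictly between $\Pr(S)$ and $\Pl(S')$ for consecutive strips — but by maximality of $\Pr(S)$ and minimality of $\Pl(S')$ this happens only if the instance is genuinely infeasible, and again it is detected by one sweep. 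Running the symmetric computation on the horizontal strips in $\mathcal{S}_2$ and on $L_2(G)$ completes the construction; since by \cref{clm:characterization} a boundary path set exists iff an extension does, we also need each of $L_1$- and $L_2$-side checks to be consistent with the other, but the proof of \cref{clm:characterization} shows the box-filling works from any boundary path set, so no further cross-check is needed.

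For the running time: there are $\Oh(h)$ strips in total, and for each strip the canonical leftmost/rightmost path has length $\Oh(n)$, so storing all the pairs takes $\Oh(nh)$ space — matching the $\Omega(nh)$ lower bound from \cref{fig:largeBoundaryPathSet}. Computing the left- and right-bounded vertices of all strips takes $\Oh(n)$ total, as already observed in the text. For each strip, the canonical path is obtained by walking in $L_1(G)$: between two consecutive forced (bounded) vertices we follow leftmost (resp.\ rightmost) outgoing edges, which is a local operation; precomputing for every vertex its leftmost/rightmost outgoing and incoming neighbor in $L_1(G)$ takes $\Oh(n)$, and then each canonical path is traced in time linear in its length, $\Oh(n)$. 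The $\preceq$-comparisons between a path and the next strip's path, and the cover check, are each a single simultaneous walk of two paths of length $\Oh(n)$. Summing over $\Oh(h)$ strips gives $\Oh(nh)$; the horizontal side is symmetric.

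The main obstacle I expect is the correctness of the canonical/greedy choice — specifically proving that the leftmost path forced through the left-bounded vertices is $\preceq$-minimal over \emph{all} boundary path pairs of $S$ (not just those respecting the forced vertices), and dually that $\Pr(S)$ is $\preceq$-maximal, and then lifting this local extremality to global extremality across all strips so that a single failure of \ref{enum:left}/\ref{enum:below}/\ref{enum:cover} for the canonical set certifies infeasibility. This requires a careful induction on the strip order: one must show that the forced vertices of $S$ are exactly the vertices that \emph{every} valid boundary path set must place on $\Pl(S)$ (by tracing back through \cref{cond:lb:above,cond:lb:below} to \cref{cond:lb:vi} and ultimately to a right-bounded vertex of a strip to the left, whose forcedness is the inductive hypothesis), and dually for right-bounded vertices. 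Once this ``the forced vertices are exactly the unavoidable ones'' lemma is in place, the extremality and hence the soundness of the single-pass feasibility test follow, and the theorem is proved.
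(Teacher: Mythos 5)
Your overall frame (process vertical strips bottom-left to top-right, force the left-/right-bounded vertices onto the boundary paths, trace paths greedily via precomputed leftmost/rightmost neighbours, $\Oh(n)$ per strip over $\Oh(h)$ strips) matches the paper's proof, but there is a genuine gap in the part you yourself flag as the main obstacle, and it is not just hard to prove---as stated it is false. You take $\Pr(S)$ to be the \emph{rightmost} $u$--$v$ path through the right-bounded vertices and claim that, by $\preceq$-maximality, any violation of \ref{enum:left} by the canonical pair forces the same violation for every admissible pair. The implication goes the wrong way: choosing $\Pr(S)$ $\preceq$-maximal makes the test $\Pr(S) \preceq \Pl(S')$ \emph{hardest} to pass, so your algorithm rejects feasible instances. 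Concretely, for a strip like $S''$ in \cref{fig:leftrightbounded}(c) with no right-bounded vertices other than $u$ and $v$, the rightmost $u$--$v$ path in $L_1(G)$ can run far to the right of where the next strip's left boundary must lie, even though a valid (much further left) choice of $\Pr(S'')$ exists. More fundamentally, your claimed extremality is internally inconsistent: \ref{enum:cover} wants $\Pr(S)$ pushed right and $\Pl(S')$ pushed left, while \ref{enum:left} wants the opposite, so no single extremal choice makes all checks one-sided, and an a-posteriori sweep cannot certify infeasibility.

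The paper threads this needle with a mechanism your proposal lacks: the notion of a \emph{suitable} edge. Both $\Pl(S)$ and $\Pr(S)$ are built by taking \emph{leftmost} suitable edges ($\Pr(S)$ starts from the subpaths of right-bounded vertices extended by their forced rightmost incoming/outgoing edges, and these fragments are then joined leftwards without crossing $\Pl(S)$); so $\Pr(S)$ is $\preceq$-minimal subject to the forced right-bounded structure, which makes \ref{enum:left} one-sided. Coverage is then not checked after the fact but enforced by the definition of suitability: when building $\Pl(S')$, an edge is only usable if it leads to $v$, to a non-right-bounded vertex of $\Pr(S)$, or to a vertex not yet in $\Sdone(L_1(G))$, so the left boundary of the next strip cannot skip over uncovered vertices, and running out of suitable edges is exactly the certificate of infeasibility. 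To repair your argument you would need to replace the ``rightmost path'' choice for $\Pr(S)$ by this leftmost-subject-to-forced-vertices construction and fold the cover condition \ref{enum:cover} into the edge-selection rule rather than a separate sweep.
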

\begin{proof}
We show how to compute the boundary path pairs for vertical strips;
horizontal strips can be treated analogously.
Let $\Sdone$ be the strips in~$\mathcal{S}_1$ that
have already been processed, that is, the strips for which the left and right boundary paths have already been computed.
Let $S$ be a strip with start rectangle~$\P(u)$ and end rectangle~$\P(v)$
such that every strip left of $S$ is in~$\Sdone$.

An edge $(x, y)$ of $L_1(G)$ is \emph{suitable} if one of the following
conditions applies:
\begin{enumerate}[label=(E\arabic*),leftmargin=*]
	\item \label{cond:v}
    $y = v$;
  \item \label{cond:left}
    $y$ is a non-fixed vertex in $\Pr(S')$, where $S' \in \Sdone$
    is directly left of $S$ and $y$ is not right-bounded in~$S'$;
  \item \label{cond:new}
    $y$ is a non-fixed vertex and $(x, y)$ is not an edge of $\Sdone(L_1(G))$.
\end{enumerate}
Condition~\ref{cond:left} means that $\R(y)$ can span from $S'$
into~$S$ since it is not right-bounded in~$S'$.
Thus, in \cref{fig:leftrightbounded}(a) $(a, b)$ is suitable but $(b, c')$ is not.
Furthermore, $(d, v)$ is suitable by \cref{cond:v}, 
and $(u, a)$, $(a, b)$, and $(c, d)$ are suitable by \cref{cond:new}.  
Note that $\Pl(S)$ may only use suitable edges. 
Hence, to compute $\Pl(S)$, we can start at $u$ and 
always add the leftmost suitable outgoing edge until we reach~$v$.
It follows that if, at some point, there is no suitable edge available, 
then $(L_1, L_2)$ does not admit a boundary path set.
Taking the leftmost suitable outgoing edge ensures that $\Pl(S)$ 
passes through all left-bounded vertices in~$S$. 

We now show how to construct $\Pr(S)$, enforcing 
that all right-bounded vertices lie on $\Pr(S)$.
We thus start with the set of disjoint subpaths $P_1, P_2, \ldots, P_{k'}$ induced by $u$, the right-bounded vertices, and $v$ 
ordered from bottom to top; see \cref{fig:algo}(a).
Note that for a right-bounded vertex $x$
its rightmost outgoing edge also has to be in $\Pr(S)$, unless $x = v$,
and its rightmost incoming edge also has to be in $\Pr(S)$, unless $x = u$.
Therefore, we extend each subpath with these rightmost outgoing and
incoming edges; see \cref{fig:algo}(b).
For $i \in \set{1, \dots, k'-1}$,
we then simultaneously extend $P_i$ and $P_{i+1}$ 
by always taking the leftmost suitable outgoing and incoming edge, respectively,
but without crossing $\Pl(S)$.
If the extensions of~$P_i$ and~$P_{i+1}$ meet, we join them; see
\cref{fig:algo}(c).  Otherwise, both extensions will stop
(due to a lack of suitable edges). In this case
there is no path $\Pr(S)$, and then the REL $(L_1, L_2)$ does not
admit a boundary path set.

\begin{figure}[tb]
  \centering
  \includegraphics{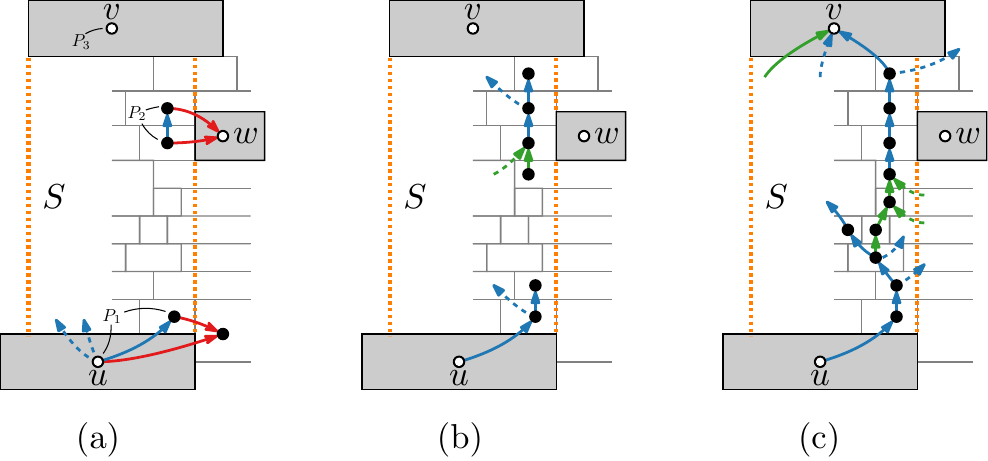}
  \caption{Computation of $\Pr(S)$ (a) starting with subpaths induced by $u$, right-bounded vertices and $v$, 
  (b) extending these subpaths with their rightmost predecessor and successor, 
  and (c) leftwards until they meet. (Extensions downwards are shown green.)}
  \label{fig:algo}
\end{figure}

Once $\Pl(S)$ and $\Pr(S)$ have been computed successfully, 
we update the edge set of $\Sdone(L_1(G))$ before processing the next strip.

The runtime is linear in the size of the boundary path set, that is, $\Oh(nh)$.
\end{proof}

\noindent
Next, we show how to obtain an extension of $\P$ from a boundary path set.

\begin{theorem}\label{clm:drawing}
Let $G$ be a PTP graph with $n$ vertices and REL $(L_1, L_2)$,
let $\P$ be a partial rectangular dual of $G$, and let $h = \abs{ \P }$.
Given a boundary path set of $(L_1, L_2)$, we can find an extension of $\P$
in $\Oh(nh)$ time.
\end{theorem}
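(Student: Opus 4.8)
The plan is to reverse the construction used in the proof of \cref{clm:characterization}. Recall that there the boxes $B = S \cap S'$ (for $S \in \mathcal{S}_1$, $S' \in \mathcal{S}_2$) were filled one by one in bottom-left-to-top-right order, and for each box the subgraph $G_B$ bounded by the four relevant boundary-path segments was drawn with the algorithm of Kant and He~\cite{KH97}, then stretched vertically and horizontally to match the already-drawn neighbours. To get the claimed $\Oh(nh)$ running time, I would carry out exactly this procedure but maintain everything combinatorially and only assign coordinates at the very end.

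First I would enumerate the boxes. Since a box is the intersection of a vertical and a horizontal strip, there are $\Oh(h^2)$ potentially, but only $\Oh(h)$ of them are non-empty along any one strip and the total combinatorial size of all the $G_B$ together is $\Oh(nh)$ (each vertex of $G$ appears in $G_B$ for a box $B$ only if it lies on, or is enclosed by, the relevant boundary paths, and by the discussion after \cref{clm:coordinateorder} a single vertex can appear in $\Omega(n)$ boundary-path pairs but the strips partition the non-fixed vertices so each non-fixed vertex is \emph{interior} to only $\Oh(1)$ boxes; the $\Oh(nh)$ bound comes from boundary repetitions). For each non-empty box $B$, I would identify the four corner vertices $v_{\mathrm{l}\mathrm{b}}, v_{\mathrm{l}\mathrm{t}}, v_{\mathrm{r}\mathrm{b}}, v_{\mathrm{r}\mathrm{t}}$ as the pairwise intersections of $\Pl(S), \Pr(S), \Pb(S'), \Pt(S')$, extract $G_B$, close it off with a 4-cycle to make it a PTP graph with an induced REL, and run Kant--He to obtain $\R_B$ in time linear in $|V(G_B)|$; summed over all boxes this is $\Oh(nh)$.

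Next I would handle the stretching. Processing boxes in the fixed order, when I reach $B$ the box (or fixed rectangle) immediately to its left and the one immediately below have already been drawn. Property~\ref{enum:left} guarantees that the sequence of rectangles along the right side of the left neighbour is a subsequence-compatible refinement of the sequence along the left side of $\R_B$ (and symmetrically \ref{enum:below} for the bottom), and property~\ref{enum:cover} guarantees the horizontal contacts with the bounding fixed rectangle of $\P(u)$ are exactly right. Concretely, I would record for each box only the \emph{order} of the y-coordinates of horizontal cuts on its left side and the order of x-coordinates of vertical cuts on its bottom side; merging these orders with those of the already-placed neighbour via \cref{clm:coordinateorder} tells me how to interleave the cuts of $\R_B$ into the global grid. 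This is a purely order-theoretic merge, doable in time linear in the boundary sizes, hence $\Oh(nh)$ overall. Only after all boxes are processed do I walk the final combined grid once and assign integer coordinates (spacing cuts by $\eps$, or by $d/n$ if exact coordinates of the fixed rectangles must be respected, invoking \cref{clm:lp:epsolution}); this also costs $\Oh(nh)$.

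The main obstacle I expect is proving that the local stretchings are globally consistent, i.e. that when I stretch $\R_B$ to fit its left neighbour \emph{and} its bottom neighbour simultaneously, the two sets of required cuts do not conflict. The reason they cannot is that the horizontal cuts come entirely from the left-boundary matching and the vertical cuts entirely from the bottom-boundary matching, and these involve disjoint sets of coordinate variables (the x- and y-dimensions are independent for a fixed REL, a fact already used in \cref{sec:lp}); so the vertical stretch and the horizontal stretch of $\R_B$ commute. A secondary subtlety is verifying that every non-fixed vertex of $G$ is placed exactly once — this follows from \ref{enum:cover}, since the union of the $S(L_1(G))$ is all of $L_1(G)$ and each such vertex is interior to a unique box — and that the resulting rectangles have pairwise disjoint interiors and realize $(L_1,L_2)$, which holds because each $\R_B$ does locally and the stretching preserves all contacts along box boundaries by \ref{enum:left}--\ref{enum:cover}. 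With these consistency claims established, concatenating the per-box costs gives the $\Oh(nh)$ bound, completing the proof.
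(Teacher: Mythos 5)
Your proposal matches the paper's own proof: process the boxes $B = S \cap S'$ in bottom-left-to-top-right order, run Kant--He on each $G_B$, fit the result to the already-drawn neighbours using \ref{enum:left}--\ref{enum:cover}, and bound $\sum_B \abs{V(G_B)}$ by $\Oh(nh)$ via the observation that each rectangle's boundary meets only $\Oh(h)$ boxes. The one imprecision is your claim that a non-fixed vertex is interior to only $\Oh(1)$ boxes (a rectangle can fully contain many boxes); the paper handles those boxes separately, noting each contributes $\abs{V(G_B)}=1$ and there are only $\Oh(h^2)$ boxes, so the $\Oh(h^2+nh)=\Oh(nh)$ bound stands either way.
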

\begin{proof}
In the proof of Theorem~\ref{clm:characterization}, we gave an algorithm that finds
for every box~$B$ obtained by the intersection of two strips 
the graph $G_B$ of vertices whose rectangles (partially) lie inside or on the boundary of~$B$.
The algorithm computes a rectangular dual~$\R_{B}$ of~$G_B$ with the algorithm of Kant and He~\cite{KH97}.
This requires $\Oh(\abs{V(G_B)})$ time per box~\cite{KH97}.
Then~$\R_{B}$ is fit into the extension built so far,
which can also be done in $\Oh(\abs{V(G_B)})$
time per box.

We now argue that $\sum_B \abs{V(G_B)} = \Oh(nh)$.
Namely, a box $B$ either lies completely inside a rectangle, in which case $\abs{V(G_B)} = 1$,
or it contains part of the boundary of every rectangle that corresponds to 
a vertex in $V(G_B)$.
For any non-fixed vertex $v$, each 
of the four boundary sides of $\R(v)$ lies either inside a single 
strip or on the boundary between two 
strips. Hence, the boundary of $\R(v)$ can lie in only $\Oh(h)$ boxes in total.
As there are $\Oh(h^2)$ boxes, we have $\sum_B \abs{V(G_B)} \in \Oh(h^2 + nh) = \Oh(nh)$.
\end{proof}

\section{Linear-Time Algorithm}
\label{sec:linear-time}
Explicitly constructing a boundary path set, as in \cref{clm:algorithm}, 
requires time proportional in the size of the set, which can however be in $\Omega(nh)$.
In this section, we show that even without an explicit construction,
we can decide if a boundary path set exists, and if so, compute an extension.
Both the decision and the computation can be done in linear time.

Our approach relies on the following observations. Suppose a boundary path set exists.
Let $v$ be a non-fixed vertex that lies on a boundary path of vertical strips $S_1, \ldots, S_k$, ordered from left to right.
Then the left boundary of $\R(v)$ lies in $S_1$ and the right boundary in $S_k$. 
Thus, to compute the x-coordinates of $\R(v)$, it suffices to know the leftmost and the rightmost boundary path on which $v$ lies.
Instead of constructing all boundary path pairs of vertical strips, 
we only construct the subgraph $H_1$ of $L_1(G)$ induced by the fixed vertices and the vertices on the boundary path pairs.
We call $H_1$ the \emph{vertical boundary graph} of $(L_1, L_2)$.
Furthermore, for each edge $e$ in $H_1$, we store the leftmost and the rightmost strip 
for which $e$ lies on a boundary path. 
Note that as $H_1$ is a subgraph of~$L_1(G)$, the size of $H_1$ is in $\Oh(n)$.
Analogously, we define $H_2$ for the horizontal~strips.

Before we show how to construct the boundary graphs $H_1$ and $H_2$, 
we prove that they suffice to compute an extension of $\P$.

\begin{lemma} \label{clm:H1}
Let $G$ be a PTP graph with $n$ vertices and REL $(L_1, L_2)$,
and let $\P$ be a partial rectangular dual of $G$.
If boundary graphs $H_1$ and $H_2$ of $(L_1, L_2)$ are given, 
then an extension of $\P$ that realizes $(L_1, L_2)$ can be computed in $\Oh(n)$ time.
\end{lemma}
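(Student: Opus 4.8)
The goal is to show that, given the boundary graphs $H_1$ and $H_2$ together with, for each edge, the leftmost and rightmost strip on whose boundary path it lies, we can fill the frame in linear time. The strategy is to run the SDC \RecDual of \Cref{sec:lp} directly on the boundary graphs rather than on all of $G$, exploiting the fact that the $x$- and $y$-dimensions decouple. First I would observe that a non-fixed vertex $v$ occurs on a contiguous block of vertical strips $S_1,\dots,S_k$ (ordered left to right), so the left side of $\R(v)$ sits on the left boundary of $S_1$ and the right side of $\R(v)$ sits on the right boundary of $S_k$; the information stored with the edges of $H_1$ tells us exactly which strips these are. The $x$-coordinates of all fixed rectangles and of the vertical strip boundaries are already determined by $\P$, so what remains is to assign $x$-coordinates to the non-fixed vertices consistently with $L_1(G)$ — but the only non-fixed vertices whose $x$-coordinates are not yet pinned to a strip boundary are those that lie strictly inside a single strip, and these are handled locally.

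\textbf{Key steps.} (1) Partition the non-fixed vertices by the set of vertical strips they touch, using $H_1$; symmetrically partition by horizontal strips using $H_2$. (2) For each vertical strip $S$, let $G_S$ be the subgraph of $L_1(G)$ induced by $\Pl(S)$, $\Pr(S)$, and the vertices strictly between them that occur only in $S$; note $\sum_S \abs{V(G_S)} = \Oh(n)$ because a vertex interior to exactly one strip is counted once, and a vertex on a boundary path lies on the two bounding paths of only the strips in its (contiguous) block but contributes to $H_1$ with total size $\Oh(n)$ — more carefully, each edge of $H_1$ is used in the $G_S$'s only for the strips between its stored leftmost and rightmost strip, but only the \emph{boundary} occurrences matter for the count, and interior vertices are charged once, so the whole family has size $\Oh(n)$. (3) On each $G_S$ we solve the $x$-restricted part of \RecDual with $\eps = d/n$ (as in \Cref{clm:lp:epsolution}), with the $x$-coordinates of $\Pl(S)$ forced to the left boundary of $S$, those of $\Pr(S)$ forced to the right boundary, and the $x$-coordinates of the start and end rectangles inherited from $\P$; since $G_S$ is itself a PTP-like graph with a REL, this is just the Kant–He construction on $G_S$ restricted to one dimension, running in $\Oh(\abs{V(G_S)})$ time. (4) Do the symmetric thing for horizontal strips to fix all $y$-coordinates. (5) Glue: because a non-fixed vertex $v$ lies on a contiguous block of vertical strips, the $x$-interval we assign it is the union of per-strip pieces, which agree on shared boundaries by construction (property~\ref{enum:left} of the boundary path set guarantees the boundary paths of consecutive strips are $\preceq$-compatible, so the partial assignments are monotone and can be concatenated); likewise in $y$. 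The result assigns each vertex an axis-aligned rectangle, and the REL-constraints are satisfied strip-by-strip, hence globally.

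\textbf{Main obstacle.} The delicate point is the consistency of the $x$-coordinates of a vertex $v$ across the several vertical strips it spans: within each such strip we must not assign $v$ a left side other than the left boundary of the \emph{first} strip of its block nor a right side other than the right boundary of the \emph{last}. I would handle this by not solving the strips independently but in the left-to-right sweep order already used in \Cref{clm:algorithm}: when processing strip $S$, every vertex $v$ on $\Pl(S)$ has already had its left $x$-coordinate determined (either $v$ is fixed, or $v$ appeared on the right boundary of the previous strip and its $x$-interval's right end is the common boundary, which becomes $v$'s left-in-$S$ value only if $S$ is where $v$'s block ends — otherwise $v$ is interior-to-boundary and simply carries its $x$-interval forward). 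Concretely, I would maintain for each vertex the current known $x$-interval $[x_{1,v}, x_{2,v}]$, initialized for fixed vertices and for boundary-path vertices set incrementally: $x_{1,v}$ is fixed the first time $v$ appears (the left boundary of its first strip), and $x_{2,v}$ is updated to the right boundary of the current strip until we reach the last strip of $v$'s block. The Kant–He call on $G_S$ then only needs to assign $x$-coordinates to the genuinely interior vertices of $S$, which is a self-contained sub-instance. Verifying that this incremental scheme never produces a conflict — i.e., that the forced values on $\Pl(S)$ and $\Pr(S)$ admit a valid completion inside $S$ — is exactly what property~\ref{enum:left} and the fact that $H_1$ came from a genuine boundary path set give us, and this is the one place where a short argument (rather than a calculation) is needed.
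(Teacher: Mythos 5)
There is a genuine gap, on two counts. First, your running-time accounting does not work. You iterate over the vertical strips and, for each strip $S$, materialize a subgraph $G_S$ containing $\Pl(S)$ and $\Pr(S)$, and in your ``main obstacle'' paragraph you update $x_{2,v}$ once per strip of $v$'s block. A single vertex can lie on the boundary paths of $\Omega(h)$ strips, and $\Omega(n)$ vertices can do so simultaneously (this is exactly the point of \cref{fig:largeBoundaryPathSet}), so $\sum_S \abs{V(G_S)}$ is $\Omega(nh)$ in the worst case and your sweep performs $\Omega(nh)$ work; the claim that ``the whole family has size $\Oh(n)$'' is false as stated. The entire reason the boundary graph $H_1$ stores only the leftmost and rightmost strip per edge is to avoid re-expanding this compressed representation, and any algorithm that touches a boundary-path vertex once per strip of its block forfeits linearity. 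The paper avoids this by iterating over the $\Oh(n)$ \emph{faces} of $H_1$ rather than over the strips: each inner face has a left and a right path, is classified as either containing part of a strip boundary (in which case the inner vertices of its left path get their right x-coordinate, and those of its right path their left x-coordinate, pinned to that boundary) or lying inside a strip (in which case Kant--He is run on the enclosed region and scaled to the strip's width), and every vertex is handled a constant number of times.

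Second, your rule ``$x_{1,v}$ is the left boundary of $v$'s first strip'' is incorrect. If the leftmost strip $S_1$ on whose boundary paths $v$ appears is one where $v$ lies on $\Pr(S_1)$ but not on $\Pl(S_1)$, then the left side of $\R(v)$ lies \emph{strictly inside} $S_1$ (the paper's statement is only that the left boundary of $\R(v)$ lies \emph{in} $S_1$), and its value is determined by the layout of the interior region of $S_1$ to the left of $v$ --- precisely the interior-type face of $H_1$ in the paper's case distinction. The symmetric problem occurs at the right end of $v$'s block. So the coordinate of a boundary-path vertex cannot in general be read off from the strip boundaries alone; you need the per-face Kant--He layouts to supply it, and your proposal only invokes Kant--He for ``genuinely interior'' vertices. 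Your high-level intuition (decoupling x from y, pinning bounded vertices to strip boundaries, laying out interiors with Kant--He) matches the paper's, but the decomposition by strips rather than by faces of $H_1$ breaks both the time bound and the coordinate assignment.
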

\begin{proof}
We show how to compute the x-coordinates of rectangles using $H_1$; 
the y-coordinates can be computed analogously with $H_2$.
The idea is to compute a rectangular dual for each inner face of $H_1$,
which in total will yield a full rectangular dual.
Note that the boundary of each face of $H_1$ consists of two directed paths between a start and an end vertex.
Therefore, each face has a single source, a single sink, a left path, and a right path.

We distinguish two types of inner faces of $H_1$, namely,
those that contain part of the boundary of a strip and those that do not. 
A face $f$ of the former type can be identified by
the occurrence of a right-bounded vertex $v$ on the left path of $f$ 
or a left-bounded vertex $v$ on the right path of $f$
where $v$ is not the source or sink of $f$; see \cref{fig:faces}(a).
Note that in this case all inner vertices of the left path of $f$ are right-bounded
and all inner vertices of the right path of $f$ are left-bounded.
We then set the right x-coordinate of every inner vertex on the left path 
and the left x-coordinate of every inner vertex on the right path to 
the x-coordinate of the respective boundary of the strip.   

Otherwise, an inner face $f$ of $H_1$ describes a region inside a strip of $S$; see \cref{fig:faces}(b).
We define the graph $G_f$ as the subgraph of $G$ with all vertices that lie on or inside the
cycle that is defined by $f$.
By adding an outer four-cycle appropriately, we obtain a rectangular dual $\R_f$ of $G_f$ with the algorithm by Kant and He~\cite{KH97}.
We then scale $\R_f$ to the width of $S$ and set the x-coordinates
for the vertices in $G_f$ inside~$S$ accordingly,
that is, the right x-coordinate for the vertices on the left path of $f$,
the left x-coordinate for vertices on the right path of $f$,
and both x-coordinates for interior vertices of $G_f$.

After we have processed all faces, both x-coordinates of all vertices are set
since each vertex is either fixed, or has a face to the left and a face to the right,
or lies inside a face.
Since the faces are ordered from left to right in accordance with their respective strips,
the computed x-coordinates of the rectangles also form the correct horizontal adjacencies. 
We repeat this process with $H_2$ to compute the y-coordinates
and thus to obtain also the correct vertical adjacencies.
 
Processing a face $f$ takes time linear in the size of $G_f$. 
Hence, the total running time is linear in the size of $L_1(G)$ and $L_2(G)$, 
and thus in $\Oh(n)$. 
\end{proof}

\begin{figure}[tb]
  \centering
  \includegraphics{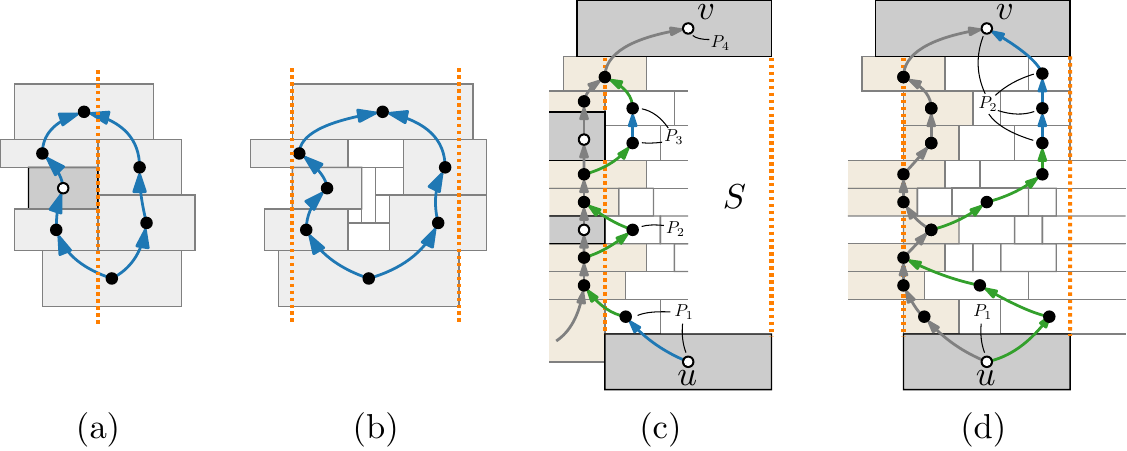}
  \caption{Correspondence between a face of $H_1$ and (a) a part of the boundary of a strip
    or (b) a region inside a strip; extending $H_1$ with (c) $\Pl(S)$ and (d) $\Pr(S)$.}
  \label{fig:faces}
\end{figure}

\begin{lemma} \label{clm:computeH1}
Let $G$ be a PTP graph with $n$ vertices and REL $(L_1, L_2)$,
and let $\P$ be a partial rectangular dual of $G$.
In $\Oh(n)$ time, we can decide whether $(L_1, L_2)$ admits a boundary path set 
with respect to $\P$ and, in the affirmative, compute boundary graphs $H_1$ and $H_2$.
\end{lemma}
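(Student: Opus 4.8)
The plan is to mimic the structure of the $\Oh(nh)$-time algorithm from \cref{clm:algorithm}, but to track, for each edge of $L_1(G)$, only the \emph{leftmost} and \emph{rightmost} vertical strip on whose boundary path the edge lies, rather than recording full boundary path pairs strip-by-strip. First I would process the vertical strips $S_1, S_2, \dots$ of $\mathcal{S}_1$ in the same bottom-left-to-top-right order as before. When processing a strip $S$ with start rectangle $\P(u)$ and end rectangle $\P(v)$, the notion of a \emph{suitable} edge (\cref{cond:v,cond:left,cond:new}) depends only on $v$, on which non-fixed vertices lie on $\Pr(S')$ of the strip $S'$ directly to the left (and are not right-bounded there), and on which edges have already been used by earlier strips. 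Crucially, the left-bounded and right-bounded vertices of all strips can be precomputed in $\Oh(n)$ time, as noted after the definition of right-bounded. So I would compute $\Pl(S)$ exactly as in \cref{clm:algorithm} (walk from $u$ taking the leftmost suitable outgoing edge), and compute $\Pr(S)$ by the same two-sided extension of the subpaths through $u$, the right-bounded vertices, and $v$. The one change is bookkeeping: instead of appending these paths to an ever-growing $\Sdone(L_1(G))$, I add any not-yet-seen vertices and edges to $H_1$, and for each edge traversed by $\Pl(S)$ or $\Pr(S)$ I update its stored leftmost/rightmost strip index. If at any point a required suitable edge is missing, or the two extensions building $\Pr(S)$ stall without meeting, I report that no boundary path set exists, which is correct by the arguments already given in the proof of \cref{clm:algorithm}.

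The key point making this linear is that each edge of $L_1(G)$ is \emph{touched} by the algorithm only a bounded number of times, even though it may lie on $\Omega(h)$ boundary paths. First I would argue that once an edge $e=(x,y)$ has been placed on a boundary path $\Pr(S')$ with $y$ not right-bounded in $S'$, the only subsequent strips that reuse $e$ on a boundary path are exactly those for which $\R(y)$ spans — an interval of consecutive strips — and the leftmost/rightmost indices for $e$ are monotone along this interval. So rather than re-walking $e$ once per strip, I would handle the ``carry-over'' edges of \cref{cond:left} in aggregate: when $\Pr(S')$ is computed, mark each of its non-right-bounded edges as ``active'', and when the strip at which the corresponding rectangle must end is reached (this is determined by the right-bounded vertices, already known), deactivate it and finalize its rightmost-strip value. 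An edge becomes active at most once (when it is first used via \cref{cond:new}) and is finalized at most once, so the total work on carry-over edges is $\Oh(n)$. The fresh edges of \cref{cond:new} are each examined $\Oh(1)$ times since they are examined only in the strip where they are first discovered. Hence the whole sweep over $\mathcal{S}_1$ runs in $\Oh(n)$, and analogously the sweep over $\mathcal{S}_2$ produces $H_2$ in $\Oh(n)$.

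For correctness I would show that the constructed $H_1$ is exactly the vertical boundary graph of the boundary path set produced by the (slower) algorithm of \cref{clm:algorithm}, and that the stored leftmost/rightmost strip per edge agrees with that set: this follows because the sequence of $\Pl(S)$ and $\Pr(S)$ computed is literally the same, I am merely discarding the redundant repeated records. In particular, \ref{enum:left} (the $\preceq$ condition between $\Pr(S)$ and $\Pl(S')$ of adjacent strips) and \ref{enum:cover} (coverage of $L_1(G)$) hold because they hold for the output of \cref{clm:algorithm}; the existence test succeeds here if and only if it succeeds there. Combined with \cref{clm:H1}, which turns $H_1$ and $H_2$ into an actual extension in $\Oh(n)$ time, this will prove the lemma.

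The main obstacle I anticipate is the amortization argument for the carry-over edges of \cref{cond:left}: I must be careful that ``the strip at which a spanning rectangle $\R(y)$ ends'' is well-defined from the precomputed bounded-vertex data \emph{before} $\Pr$ of the later strips is known, and that a non-right-bounded vertex of $\Pr(S')$ does not get re-examined at each intermediate strip. The clean way is to observe that for such a $y$, the strips it spans are precisely those between where $y$ first appears and the unique later strip $S''$ in which $y$ is right-bounded (or, if $y$ is never again bounded, up to where its outgoing structure forces it to terminate — but property~\ref{enum:cover} guarantees such a terminating strip exists); I would make this precise and then the monotone-interval update is a single range assignment per edge. Everything else — reusing the left-path walk, the two-sided right-path extension, and the face-based drawing from \cref{clm:H1} — is routine given the earlier results.
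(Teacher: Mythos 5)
Your overall strategy---sweep the strips in the same order as \cref{clm:algorithm} but charge each edge of $L_1(G)$ only $\Oh(1)$ work---matches the paper's intent, but the mechanism you propose for the carry-over edges of \ref{cond:left} has a gap that you flag yourself and then do not close. To skip a stretch of $\Pl(S)$ that coincides with some earlier $\Pr(S')$, you must know, at the moment the walk enters that stretch, where it ends \emph{for the current strip}; your answer is that the interval of strips spanned by a non-right-bounded vertex $y$ is determined by the precomputed bounded-vertex data, or else by ``where its outgoing structure forces it to terminate.'' Neither is established: a vertex that is never right-bounded in any later strip has no precomputed terminating strip, and where its rectangle actually ends depends on the leftmost-suitable-edge choices made in \emph{later} strips---exactly the information you are trying to avoid computing. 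Invoking \ref{enum:cover} here is circular, since you are in the middle of deciding whether a boundary path set exists at all. A second, related omission: after skipping portions of $\Pl(S)$ you still must certify that the assembled object is a single directed $u$--$v$ path visiting the left-bounded vertices of $S$ in the correct bottom-to-top order, and you never say how to do this without traversing the skipped parts.

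The paper resolves both points by never walking the shared portions at all. For $\Pl(S)$ it walks only the subpaths $P_1,\dots,P_k$ induced by $u$, the precomputed left-bounded vertices of $S$, and $v$; attaches each $P_i$ to the current $H_1$ by a single leftmost incoming edge $(w_i,x_i)$ and a single leftmost outgoing edge $(y_i,z_i)$; rejects if the required endpoint is fixed or not already in $H_1$; and checks the relative order of the $P_i$ with an st-ordering of $L_1(G)$ instead of by traversal. For $\Pr(S)$ it runs the two-sided extension of \cref{clm:algorithm} but stops an extension the moment it reaches a non-fixed vertex already in $H_1$, so every extension step either inserts a new edge into $H_1$ or terminates, which yields the $\Oh(n)$ bound with no per-edge interval bookkeeping. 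The strip information is then recorded on the \emph{faces} of $H_1$ (left faces of $\Pl$-pieces are boundary faces, left faces of $\Pr$-pieces are interior faces), which is what \cref{clm:H1} actually consumes. Replacing your activate/deactivate scheme with this ``stop at $H_1$ plus st-ordering check'' mechanism would close the gap.
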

\begin{proof}
As in the proof of \cref{clm:algorithm},
we focus again on vertical strips and process them from bottom-left to top-right.
Let $\Sdone$ be the strips in~$\mathcal{S}_1$ that
have already been processed.
Let $S$ be a strip with start rectangle~$\P(u)$ and end rectangle~$\P(v)$
such that every strip left of $S$ is in~$\Sdone$.
The idea is to only compute the parts of $\Pl(S)$ that
do not coincide with a boundary path of a strip $S' \in \Sdone$
and the parts of $\Pr(S)$ that do not coincide with $\Pl(S)$. 
Initially, set $H_1$ as the subgraph of $L_1(G)$ induced by the fixed vertices. 

We start with $\Pl(S)$; see \cref{fig:faces}(c). 
Observe that $\Pl(S)$ should only consist of $u$, vertices
left-bounded in~$S$, $v$,
and vertices on paths $\Pr(S')$ for $S' \in \Sdone$.
Let $P_1, \ldots, P_k$ be the subpaths induced by $u$, vertices
left-bounded in~$S$, and~$v$.
Let~$x_i$ be the first vertex of $P_i$ and $y_i$ the last.
Further let $(w_i, x_i)$ be the leftmost incoming edge of $x_i$
and let $(y_i, z_i)$ be the leftmost outgoing edge of $y_i$. 
For $i \in \set{2, \ldots, n}$, 
$w_i$ already has to be in $H_1$ and may not be a fixed vertex; 
otherwise $\Pl(S)$ does not exist.
The analogous condition needs to hold for $(y_i, z_i)$.
If this holds for each $i \in \set{1, \ldots, k}$,
we add the vertices and edges in each $P_i$
as well as the edges $(w_i, x_i)$ and $(y_i, z_i)$ to~$H_1$.
Finally, we can test that the $P_i$'s are in the correct order in $H_1$ with an st-ordering of $L_1(G)$.

Checking the existence of $\Pr(S)$ works like the construction in \cref{clm:algorithm}; see \cref{fig:faces}(d).
Recall that we extended subpaths induced by $u$, right-bounded vertices, and $v$, 
first with rightmost incoming and outgoing edges appropriately,
and then tried to join subsequent subpaths by taking the leftmost outgoing and incoming edges, respectively.
Observe that reaching $\Pl(S)$ during such an extension, now means that we encounter a non-fixed vertex in $H_1$;
see \cref{fig:faces}.  
Hence, here we stop extensions when we encounter a non-fixed vertex $v$ that is already in $H_1$.
If connecting the subpaths to each other or $H_1$ is successful, we test their order again with an st-ordering of $L_1(G)$
and finally add them to $H_1$. 

During the construction of $H_1$ we also need to label the faces with the strips they belong to.
Therefore when a subpath $P_i$, for $i \in \set{1, \ldots, k-1}$, is
added to~$H_1$ as part of a left boundary path $\Pl(S)$,
we tell its left face that is lies on the left boundary of $S$.
For any subpath added to~$H_1$ as part of a right boundary path $\Pr(S)$,
we tell its left face that it lies inside $S$.

Lastly, note that the running time is linear in the size of $H_1$, $H_2$ and $G$. 
\end{proof}

As a result of \cref{clm:H1,clm:computeH1} we get our main result, \cref{clm:main}.
\setcounter{theorem}{0}
\begin{theorem}
  \thmtext
\end{theorem}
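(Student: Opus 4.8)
The plan is to combine \cref{clm:H1,clm:computeH1} directly: \cref{clm:computeH1} shows that in $\Oh(n)$ time we can either correctly report that $(L_1,L_2)$ admits no boundary path set, or else produce the boundary graphs $H_1$ and $H_2$; and \cref{clm:H1} shows that, given $H_1$ and $H_2$, we can construct an explicit extension of $\P$ realizing $(L_1,L_2)$ in $\Oh(n)$ additional time. Chaining these yields a linear-time algorithm that decides the partial representation extension problem with a fixed REL and, on yes-instances, outputs a rectangular dual, which is exactly the claim.

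The main step is to argue correctness, i.e.\ that deciding the existence of a boundary path set is equivalent to deciding the original extension problem. This is precisely \cref{clm:characterization}: a REL $(L_1,L_2)$ admits an extension of $\P$ if and only if it admits a boundary path set. So I would first invoke \cref{clm:characterization} to reduce ``does an extension exist?'' to ``does a boundary path set exist?''. Then I would note that \cref{clm:computeH1} answers the latter question in $\Oh(n)$ time, and that the boundary graphs it produces in the affirmative case are exactly what \cref{clm:H1} needs as input. Finally, \cref{clm:H1} turns $H_1,H_2$ into an actual rectangular dual in $\Oh(n)$ time. Summing the two linear bounds gives the overall linear running time, and since $n=\abs{V(G)}$ and the input size is $\Theta(n)$ (a PTP graph has $\Oh(n)$ edges), this is linear in the input size. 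The four outer vertices $\vWest,\vSouth,\vEast,\vNorth$ may need to be added to $\P$ first, as discussed in \cref{sec:characterization}, but this takes constant time and does not affect the bounds.

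The only genuine obstacle is making sure the two lemmas compose cleanly: \cref{clm:computeH1} must output $H_1$ and $H_2$ in the exact form (induced subgraphs of $L_1(G),L_2(G)$ with face-to-strip labels) that \cref{clm:H1} consumes, and \cref{clm:H1} relies on those labels to decide, per face, whether the face lies on a strip boundary or in a strip interior. Since \cref{clm:computeH1} explicitly records, for each subpath it adds, whether it bounds a strip on the left or lies inside a strip, this bookkeeping matches what \cref{clm:H1} expects, so the composition is immediate. Hence the proof is essentially a two-line citation of the preceding lemmas; I would keep it short.

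\begin{proof}
By \cref{clm:characterization}, a REL $(L_1,L_2)$ of $G$ admits an extension of a given partial rectangular dual $\P$ if and only if $(L_1,L_2)$ admits a boundary path set (with respect to $\P$). By \cref{clm:computeH1}, we can decide in $\Oh(n)$ time whether such a boundary path set exists and, if so, compute the boundary graphs $H_1$ and $H_2$ of $(L_1,L_2)$. In the affirmative case, \cref{clm:H1} turns $H_1$ and $H_2$ into an explicit rectangular dual that extends $\P$ and realizes $(L_1,L_2)$, again in $\Oh(n)$ time. (If the outer vertices $\vWest,\vSouth,\vEast,\vNorth$ are not already fixed in $\P$, we first place their rectangles around $\P$ as described in \cref{sec:characterization}; this takes constant time.) Since a PTP graph on $n$ vertices has $\Oh(n)$ edges, the combined running time $\Oh(n)$ is linear in the size of the input.
\end{proof}
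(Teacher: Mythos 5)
Your proposal matches the paper's own argument: the paper derives \cref{clm:main} directly as the combination of \cref{clm:computeH1} (linear-time decision and construction of the boundary graphs, whose correctness rests on \cref{clm:characterization}) and \cref{clm:H1} (linear-time construction of the extension from $H_1$ and $H_2$). Your version is correct and, if anything, slightly more explicit than the paper about how the pieces compose.
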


\section{Concluding Remarks}

In this paper, we have characterized the partial rectangular duals 
that admit an extension realizing a given REL in terms of boundary path sets.
Based on this, we have given an algorithm that computes an extension,
if it exists, in time proportional to the size of the boundary path set.
We have sped up this algorithm by considering only the underlying
simple graph of a boundary path set~-- the boundary graph.

We have also formulated a system of difference constraints (a special kind of LP) that
can handle slightly more general versions of the partial rectangular dual extension problem.
Furthermore, the LP can also be used to solve
the simultaneous rectangular dual representation problem for PTP graphs with given RELs.
One can simply formulate an LP for each graph separately and then concatenate
them into a single LP where the variables for shared vertices are merged. 
As far as we know, this is the first result concerning the simultaneous representation of contact representations.
It would be interesting to see this approach applied to other contact representations.

The partial rectangular dual extension problem remains open when no
REL is specified.  Eppstein \etal~\cite{EMSV12} gave algorithms that
compute constrained area-universal rectangular duals and solved the
extension problem for RELs.  A partial rectangular dual induces a
partial REL.  Hence an extension of a partial rectangular dual~\P can
be found by computing every extension of this partial REL and by
testing for each whether it admits an extension of~\P, using our
linear-time algorithm.  There can, however, be exponentially many
extensions of a partial REL. Naturally, we are interested in a faster approach.

\pdfbookmark[1]{References}{References} 
\bibliographystyle{splncs04}
\bibliography{sources}

\end{document}